\newcommand{\R}{\mathbb R}
\newcommand{\Z}{\mathbb Z}
\newcommand{\N}{\mathbb{N}}
\newtheorem{thm}{Theorem}[section]
\newtheorem{lem}[thm]{Lemma}
\newtheorem{prop}[thm]{Proposition}
\theoremstyle{remark}
\newtheorem{rem}{\bf Remark}[section]
\theoremstyle{definition}
\newtheorem{defn}[thm]{Definition}
\numberwithin{equation}{section}
\begin{document}
\title[]{Long-time Anderson Localization for the  Nonlinear quasi-periodic Schr\"odinger \\Equation on $\mathbb Z^d$}
\author{Hongzi Cong}
\address[HC]{School of Mathematical Sciences, Dalian University of Technology, Dalian 116024, China}
\email{conghongzi@dlut.edu.cn}

\author[]{Yunfeng Shi}
\address[YS]{School of Mathematics,
Sichuan University,
Chengdu 610064,
China}
\email{yunfengshi@scu.edu.cn}
%\author{}
%\address[Z. Zhang] {School of Mathematical Sciences,
%Peking University,
%Beijing 100871,
%China}
%\email{zfzhang@math.pku.edu.cn}
\author{W.-M. Wang}
\address[WW] {CNRS and D\'epartment De Math\'ematique,
	Cergy Paris Universit\'e,
	Cergy-Pontoise Cedex 95302,
	France}
\email{wei-min.wang@math.cnrs.fr}

\date{\today}

\keywords{Anderson localization, Birkhoff normal form, Nonlinear quasi-periodic Schr\"odinger equation}

%%% ----------------------------------------------------------------------

\begin{abstract} Using a Birkhoff normal form transform to impede mode transfer in a finite ``barrier", we prove localization of {\it arbitrary} $\ell^2$ data
for polynomially long time for the nonlinear quasi-periodic Schr\"odinger equation on $\mathbb Z^d$. 

\end{abstract}
%%% ----------------------------------------------------------------------

\maketitle

\section{Introduction}

We consider the nonlinear Schr\"odinger equation on $\mathbb Z^d$: 
\begin{equation}\label{010701}
	{\rm{i}} \dot{q}_{\bm j}=V_{\bm j}(\bm \theta,\bm \alpha)q_{\bm j}+\epsilon_1(\Delta q)_{\bm j}+\epsilon_2|q_{\bm j}|^2q_{\bm j},\, {\bm j}\in\mathbb Z^d, 
\end{equation}
%and in particular the solution  $q(t)$ of \eqref{010701} with an initial  state $q(0)\in\ell^2(\mathbb{Z}^d)$ as $t\to\infty$.
where $\epsilon_1$ and $\epsilon_2$ are parameters in $[0, 1]$, and 
$$(\Delta q)_{\bm j}=\sum_{\bm j', | \bm j'-\bm j|_1 =1} q_{\bm j'},$$
is the usual discrete Laplacian, $|\bm j|_1=\sum_{1\leq i\leq d}|j_i|$ and 
the potential $V_{\bm j}(\bm \theta,\bm \alpha)$ is a trigonometric polynomial given by
\begin{align}\label{080401}
	V_{\bm j}(\bm\theta,\bm\alpha)=\sum_{\bm \ell \in \Gamma_L}v_{\bm\ell}\cos2\pi\left(\bm \ell  \cdot\left(\bm\theta+\bm{j}\bm{\alpha}\right)\right)\ {\rm with}\  \bm \theta, \bm\alpha \in [0,1]^d, 
\end{align}
where 
\begin{align*}
\bm x\bm y&:=(x_1y_1,\cdots, x_d y_d)\in \R^d,\\
\bm x\cdot\bm y&:=x_1y_1+\cdots+x_dy_d\in\R,
\end{align*}
%$$\bm x\bm y:=(x_1y_1,\cdots, x_d y_d)\in \R^d,\quad \mbox
for $\bm x,\bm{y}\in\R^d$, and $\Gamma_L\subset \mathbb{Z}^d,\ L\in\mathbb{N}$ satisfies  the following properties: 
\begin{equation*}\label{Gammk}
	\begin{aligned}
		&\text {(a) for each }\bm \ell=(\ell_k)_{1\leq k\leq d}\in\Gamma_L, \ell_k\neq 0\ {\rm for}\  \forall \ k=1, 2, ..., d; \\
		&\text {(b) for any two }\ \bm\ell,\bm\ell'\in\Gamma_L, \, \bm\ell+\bm\ell'\neq 0.
	\end{aligned}
\end{equation*}
Note that finite multi-variable cosine series naturally satisfy conditions (a) and (b).

When $\epsilon_2=0$, it is known from the work of Bourgain \cite {Bou07} that for analytic $V$, for any fixed $\bm \theta$, there 
is a large set in $\bm \alpha$, so that the linear Schr\"odinger operator $H$:
$$H=V(\bm \theta,\bm \alpha)+\epsilon_1\Delta $$
has Anderson localization if  $\epsilon_1$ is sufficiently small. (See also \cite{BGS02} for $\mathbb Z^2$; and \cite{JLS20} for  general long-range operators on $\mathbb Z^d$.)
Consequently using eigenfunction expansion, all $\ell^2$ solutions which are initially localized about 
the origin, remain localized for all time.

The main purpose of this paper is to prove an analogous, long time result, for the nonlinear equation \eqref{010701}, when $\epsilon_2\neq 0$. 
Our main result is the following: 
\begin{thm}\label{main}
	Given any $\delta,\gamma>0,$ $M\gg1$ and for all initial datum $q(0)\in\ell^2(\mathbb{Z}^d)$, let $j_0\in\mathbb{N}$ be such that
	\begin{equation*}
		\sum_{|\bm j|>j_0}|q_j(0)|^2<\delta,
	\end{equation*}
where $|\bm j|=\sqrt{\sum_{1\leq i\leq d}|j_i|^2}$.
	Then there exists a constant $\varepsilon(\gamma,L,M,j_0)>0$ so that the following holds: for $0<\epsilon:=\epsilon_1+\epsilon_2<\varepsilon(\gamma,L,M,j_0)$ and
	for all
	$$|t|\leq \delta\cdot \epsilon^{-M}$$ one has
	\begin{equation*}
		\sum_{|\bm j|>j_0+M^2}|q_{j}(t)|^2<2\delta,
	\end{equation*}
on a set in $(\bm\theta,\bm\alpha)$ of measure at least
	\begin{align*}
		1-\gamma.
	\end{align*}
\end{thm}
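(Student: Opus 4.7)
My plan is to cast \eqref{010701} as a Hamiltonian system for $H = H_{\mathrm{lin}} + \epsilon_2 G$, with $H_{\mathrm{lin}} = \sum_{\bm j} V_{\bm j}|q_{\bm j}|^2 + \epsilon_1\langle \Delta q,q\rangle$ and $G = \tfrac{1}{2}\sum_{\bm j}|q_{\bm j}|^4$, and then to work in the Anderson eigenbasis of $H_{\mathrm{lin}}$. Invoking Bourgain's localization theorem \cite{Bou07}, on a set of $(\bm\theta,\bm\alpha)$ of large measure the operator $V(\bm\theta,\bm\alpha) + \epsilon_1\Delta$ admits a complete orthonormal eigensystem $\{\phi_n,\lambda_n\}_{n\in\Z^d}$ satisfying $|\phi_n(\bm j)|\leq C e^{-c|\bm j - \pi(n)|}$ for some bijection $\pi\colon \Z^d\to\Z^d$ and $c>0$. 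Writing $q_{\bm j}=\sum_n\phi_n(\bm j)\,a_n$ yields
\begin{equation*}
H = \sum_n \lambda_n|a_n|^2 + \epsilon_2\sum_{n_1,n_2,n_3,n_4} Z_{n_1n_2n_3n_4}\,a_{n_1}\bar a_{n_2}a_{n_3}\bar a_{n_4},
\end{equation*}
with $Z_{n_1n_2n_3n_4}=\sum_{\bm j}\phi_{n_1}(\bm j)\bar\phi_{n_2}(\bm j)\phi_{n_3}(\bm j)\bar\phi_{n_4}(\bm j)$; the eigenfunction decay forces $|Z_{n_1n_2n_3n_4}|$ to decay exponentially in the diameter of $\{\pi(n_i)\}$.

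Next, I would perform a Birkhoff normal form of order $M$ tailored to the barrier $\{j_0<|\bm j|<j_0+M^2\}$. Fixing $S_{\mathrm{in}}=\{n:|\pi(n)|\leq j_0+M^2/2\}$ and $S_{\mathrm{out}}$ its complement, iterated near-identity symplectic maps $\Phi_k$, generated by quartic Hamiltonians $F_k$, should eliminate all \emph{barrier-crossing} monomials (those whose indices lie in both $S_{\mathrm{in}}$ and $S_{\mathrm{out}}$) along with all other non-resonant quartics, producing
\begin{equation*}
H\circ \Phi = \sum_n \lambda_n|a_n|^2 + \epsilon_2 Z^{\mathrm{res}} + R_M,
\end{equation*}
where $Z^{\mathrm{res}}$ decouples $S_{\mathrm{in}}$ from $S_{\mathrm{out}}$ and $\|R_M\|=O(\epsilon^{M+1})$. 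At each step a cohomological equation $\{\sum_n \lambda_n|a_n|^2,F_k\}=Y_k$ is solved, introducing divisors $\sigma_1\lambda_{n_1}+\sigma_2\lambda_{n_2}+\sigma_3\lambda_{n_3}+\sigma_4\lambda_{n_4}$. The exponential smallness of $Z_{n_1n_2n_3n_4}$ across the barrier is the essential lever: it confines the relevant quadruples to a box of size polynomial in $M$ and $\log\epsilon^{-1}$ and absorbs the combinatorial losses from the iteration.

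The main obstacle is the measure estimate ensuring invertibility of the cohomological operators. For every sign-index list arising in the procedure one needs
\begin{equation*}
\bigl|\sigma_1\lambda_{n_1}+\sigma_2\lambda_{n_2}+\sigma_3\lambda_{n_3}+\sigma_4\lambda_{n_4}\bigr|\geq \kappa
\end{equation*}
on a set of $(\bm\theta,\bm\alpha)$ of complementary measure $\lesssim \kappa^{c_0}$. Since $\lambda_n$ is an Anderson perturbation of the on-site potential value $V_{\pi(n)}(\bm\theta,\bm\alpha)$, transversality in $(\bm\theta,\bm\alpha)$ should descend from the trigonometric-polynomial structure of $V$ together with the independence conditions (a)--(b) on $\Gamma_L$; the delicate point is to preserve this transversality through Bourgain's multiscale eigenvalue construction. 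Summing over the polynomial-in-$M$ many quadruples and choosing $\kappa=\kappa(\gamma,L,M,j_0)$ sufficiently small then yields a good parameter set of measure $\geq 1-\gamma$.

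Granted the normal form, the conclusion follows from mass conservation. Because $Z^{\mathrm{res}}$ separately preserves $\sum_{n\in S_{\mathrm{in}}}|a_n|^2$ and $\sum_{n\in S_{\mathrm{out}}}|a_n|^2$, only $R_M$ transfers mass across the barrier, so
\begin{equation*}
\Big|\tfrac{d}{dt}\sum_{n\in S_{\mathrm{out}}}|a_n(t)|^2\Big|\lesssim \epsilon^{M+1}.
\end{equation*}
Integrating over $|t|\leq \delta\epsilon^{-M}$ yields a drift of order $\delta\epsilon$, comfortably inside the budget $2\delta-\delta$. Since $\Phi$ is $O(\epsilon)$-close to the identity in the appropriate weighted $\ell^2$ norm and since the exponential localization of $\phi_n$ around $\pi(n)$ lets $\sum_{|\bm j|>j_0+M^2}|q_{\bm j}|^2$ and $\sum_{n\in S_{\mathrm{out}}}|a_n|^2$ differ by $O(e^{-cM^2})$, the desired bound $\sum_{|\bm j|>j_0+M^2}|q_{\bm j}(t)|^2<2\delta$ follows after adjusting $\varepsilon(\gamma,L,M,j_0)$.
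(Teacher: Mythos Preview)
Your approach is coherent but takes a fundamentally different route from the paper, and it carries a genuine gap at the step you yourself flag as delicate.

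The paper does \emph{not} pass to the Anderson eigenbasis. Instead it keeps the hopping term $\epsilon_1\Delta$ inside the perturbation and runs the Birkhoff normal form directly in position space about the pure diagonal $D=\tfrac12\sum_{\bm j}V_{\bm j}(\bm\theta,\bm\alpha)|q_{\bm j}|^2$. The frequencies are therefore the on-site potential values $V_{\bm j}(\bm\theta,\bm\alpha)$ themselves, and every small divisor arising in the iteration is a finite integer combination $\sum_{\bm j}k_{\bm j}V_{\bm j}(\bm\theta,\bm\alpha)$ of \emph{explicit} trigonometric polynomials in $(\bm\theta,\bm\alpha)$. This is the whole point: the measure estimate (Lemma~\ref{nlsthm}) then follows from an elementary Wronskian/Vandermonde computation using conditions (a)--(b) on $\Gamma_L$, with no appeal to \cite{Bou07} anywhere in the argument. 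The normal form is carried out only in the annulus $A(j_0,M^2)$, and the remainder is split into a piece of size $\epsilon^{0.9M}$, a piece with $\Delta(\bm n)\geq M+4$ (small by the weighted norm), and a piece supported entirely on one side of the barrier, which contributes nothing to the mass transfer because $\sum_{|\bm j|>j_0}(\bm n_{\bm j}-\bm n_{\bm j}')=0$ there.

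The gap in your plan is the bound $|\sigma_1\lambda_{n_1}+\cdots+\sigma_4\lambda_{n_4}|\geq\kappa$ outside a set of measure $\lesssim\kappa^{c_0}$. The eigenvalues $\lambda_n$ produced by Bourgain's multiscale construction are only implicitly defined in $(\bm\theta,\bm\alpha)$; they do not inherit the polynomial structure of $V_{\bm j}$, and quantitative transversality for their integer combinations is not available in the literature. In the random setting of \cite{BW07,CSZ21} this step works because the $\lambda_n$ are perturbations of i.i.d.\ variables, a mechanism absent in the quasi-periodic case. Note also that after one normal-form step your generators are no longer quartic: the iteration produces monomials of every degree up to $\sim M$, so you would need non-resonance for arbitrary combinations $\sum_n k_n\lambda_n$ with $|\bm k|_1\leq M+2$, not just four-wave sums, compounding the difficulty. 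The paper's decision to treat $\epsilon_1\Delta$ as part of the perturbation rather than diagonalizing it is precisely what makes the small-divisor analysis tractable.
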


\subsection{Ideas of the proof} We use a Birkhoff normal form transform to prove the Theorem. This transform is related to that in \cite{BW07, WZ09, CSZ21} in the random setting.
(For physical motivations of the problem, see e.g., \cite{FKS12}.)
Here we extend it to the quasi-periodic setting, and to arbitrary dimensions. Similar to  \cite{BW07, WZ09, CSZ21}, this normal form transform is effectuated in a finite neighborhood, 
seen as a ``barrier" to impede transfer to higher modes, higher indexed lattice sites.  

Small divisor conditions are imposed, as usual, in the barrier for the normal form transform. The measure estimate of these small divisor conditions is more difficult 
than that in the random case. We use the generalized Wronskian approach initiated in \cite{SW23-1, SW23-2}.  This is the main novelty here.

In \cite{SW23-1}, the Wronskian method was first used for the cosine potential
in one dimensional phase space, (i.e, the $\bm\theta$ in \eqref{010701} is one dimensional).
 As it turns out, however, this method is general, and applicable to arbitrary trigonometric polynomials in arbitrary dimensional
phase space \cite{SW23-2}. Here we follow \cite{SW23-2} to make small divisor estimates and to prove that arbitrary $\ell^2$ data remain localized, in the sense of the Theorem, for polynomially long time.

\subsection{Global in time solutions to quasi-periodic systems} 
It is natural to enquire whether nonlinear quasi-periodic systems such as equation \eqref{010701} have {\it  global} solutions. In \cite{SW23-1}, quasi-periodic in time 
solutions were constructed for the nonlinear quasi-periodic wave equations (NLQPW)  in $\mathbb Z^d$, for the cosine potential (in one dimensional phase space). 
 Related results for the nonlinear quasi-periodic Schr\"odinger equation on $\mathbb Z$ were known before, see e.g., \cite{GYZ14, Yua02}. But the existence of quasi-periodic in time
 solutions to NLQPW seemed to have remained open until \cite{SW23-1}. Finally, we would like to mention that the recent result in \cite{SW23-2} should pave the way toward 
 the existence of quasi-periodic in time solutions to \eqref{010701}, and also its NLQPW counterpart, in $\mathbb Z^d$, for arbitrary trigonometric polynomial potentials, thus further generalizing Bourgain's work \cite{Bou07} to a nonlinear setting. 
 
 \subsection{Some details about the small divisors}
Before closing the section, let us give a bit more precision on the small divisors, which the reader may skip as a first reading. As mentioned earlier, the Birkhoff normal form transform is in a finite neighborhood, more precisely, in  
an annulus neighborhood of the sphere $A(j_0):=\left\{\bm j\in\mathbb{Z}^d:\left|\bm j\right|=j_0\right\}$, 
which then acts as a ``barrier" to impede mode transfer. 
We choose ``good" $(\bm\theta,\bm\alpha)$ so that $V=(V_{\bm j}(\bm \theta,\bm \alpha))_{\bm j\in\mathbb{Z}^d}$ satisfies suitable \textbf{non-resonant conditions}. 

More precisely, in the Birkhoff normal form transform, the values of the potential $V$ at different sites $\bm j$, $V_{\bm j}$ act as components of a frequency vector.  Given $\gamma,L,M,j_0>0$, we say that a frequency vector $\omega=(\omega_{\bm j})_{\bm{j}\in\mathbb{Z}^d}$ is $(\gamma,L,M,j_0)$-nonresonant if for any $0\neq \bm k=(k_{\bm j})_{j\in\Z^d}\in\mathbb{Z}^{\mathbb{Z}^d}$,
	\begin{align}\label{122106}
		\left|\sum_{\bm j\in\mathbb{Z}^d}k_{\bm j}\omega_{\bm j}\right|\geq\left(\frac{\gamma}{\left(2LM^2j_0\right)^{2(d+1)}}\right)^{10L^4M^4}.
	\end{align}
Since the normal form transform is in the barrier only, it suffices to realize the above non-resonant condition near $A(j_0)$.
We remark that this is an essential point in the proof.

\smallskip

%eliminate the terms whose support near $A(j_0):=\left\{\bm j\in\mathbb{Z}^d:\left|\bm j\right|=j_0\right\}$ and whose degree is less than $M$. Also we point out that the twist conditions %depend on $d,L,M$ and $j_0$.

The structure of the paper is as follows.  Some important facts on Hamiltonian dynamics, such as the Poisson bracket, symplectic transformation and non-resonant conditions  
 are presented in \S2. The Birkhoff normal form type theorem and the main theorem are proved in \S3.  The measure estimates for the small divisors are in \S4.  Subsequently \S5 concludes
 the proof of some technical lemmas.

\section{Structure of the transformed Hamiltonian}
We recast (\ref{010701}) as a Hamiltonian equation
\begin{equation*}
\rm{i} \dot{q}_{\bm j}=2\frac{\partial H}{\partial \bar{q}_{\bm j}},
\end{equation*}
where
\begin{align}\label{010703}
H(q,\bar q)=\frac{1}2\left(\sum_{\bm j\in\mathbb{Z}^d}V_{\bm{j}}(\bm\theta,\bm\alpha)\left|q_{\bm j}\right|^2+\epsilon_1\sum_{\bm j,\bm j'\in\mathbb{Z}^d\atop \left|\bm j-\bm j'\right|=1}q_{\bm j}\bar q_{\bm j'}+\frac12\epsilon_2\sum_{\bm j\in\mathbb{Z}^d}\left|q_{\bm j}\right|^4\right),
\end{align}
where $V_{\bm{j}}(\bm\theta,\bm\alpha)$ is defined by (\ref{080401}).
%\begin{equation}
%V(j)=\sum_{l=1}^La_l\cos\ l\left(j\cdot \alpha+\theta\right)
%+b_l\sin\left(l\left(j\cdot \alpha+\theta\right)\right)\right)
%\end{equation}
%As is well-known, the $\ell^2$-norm of the solution $q(t)$ is conserved, i.e.,
%\begin{align*}
%\sum_{j\in\mathbb{Z}^d}\left|q_j(t)\right|^2=\sum_{j\in\mathbb{Z}^d}\left|q_j(0)\right|^2\quad {\rm for}\  \forall\  t\in\mathbb{R}.
%\end{align*}
In order to prove the main result, we need to control the time derivative of the truncated sum of higher modes
\begin{align}\label{022101}
\frac{d}{dt}\sum_{|\bm j|>j_0}\left|q_{\bm j}(t)\right|^2.
\end{align}
In what follows, we will deal extensively with monomials in $q_{\bm j}$.  Rewrite any monomial in the form
\begin{align}\label{mono}
\prod_{\bm j\in\mathbb{Z}^d}q_{\bm j}^{n_{\bm j}}\bar q_{\bm j}^{n_{\bm j}'}.
\end{align}
Let
\begin{align*}
\bm n=(\bm n_{\bm j},\bm n_{\bm j}')_{\bm j\in\mathbb{Z}^d}\in \mathbb{N}^{\mathbb{Z}^d}\times\mathbb{N}^{\mathbb{Z}^d}.
\end{align*}
We  define
\begin{align*}
{\rm supp}\  \bm n&=\{\bm j\in\mathbb{Z}^d: \ \bm n_{\bm j}\neq 0\ \mbox{or}\ \bm n_{\bm j}'\neq 0\},\\
\Delta(\bm  n)&=\sup_{\bm j,\bm j'\in{\rm supp}\  \bm n }|\bm j-\bm j'|,\\
|\bm n|_1&=\sum_{\bm j\in\mathbb{Z}^d}(\bm n_{\bm j}+\bm n_{\bm j}').
\end{align*}
If $\bm n_{\bm j}=\bm n_{\bm j}'$ for all $\bm j\in\mbox{supp}\ \bm n$, then the monomial \eqref{mono} is called resonant. Otherwise it is non-resonant. Note that non-resonant monomials contribute to the truncated sum in (\ref{022101}); while resonant ones do not. We define the (resonant) set as
\begin{align}\label{030301}
\mathcal{N}=\left\{\bm n\in\mathbb{N}^{\mathbb{Z}^d}\times\mathbb{N}^{\mathbb{Z}^d}:\ \bm n_{\bm j}=\bm n_{\bm j}'\ {\rm for}\ \forall\  \bm j\right\}.
\end{align}
Given a large $j_0\in\mathbb{N}$ and $N\in\mathbb{N}$ satisfying $N\ll  j_0$, let
\begin{align*}
A(j_0,N):=\left\{\bm j\in\mathbb{Z}^d:\left|\left|\bm j\right|-j_0\right|\leq N\right\}.
\end{align*}

\begin{defn}\label{122403}
Given a Hamiltonian
\begin{align*}
W(q,\bar{q})=\sum_{\bm n\in\mathbb{N}^{\mathbb{Z}^d}\times\mathbb{N}^{\mathbb{Z}^d}}W(\bm n)\prod_{{\rm supp}\ \bm n}q_{\bm j}^{\bm n_{\bm j}}\bar {q}_{\bm j}^{\bm n_{{\bm j}}'},
\end{align*}
for $j_0,N\in\mathbb{N}$ and $r>2$, we define
\begin{align}\label{122402}
\left\| W\right\|_{j_0,N,r}=\sum_{\bm n\in\mathbb{N}^{\mathbb{Z}^d}\times\mathbb{N}^{\mathbb{Z}^d}\atop
{{\rm supp}}\ \bm n\cap A(j_0,N)\neq \emptyset}{\left|W(\bm n)\right|\cdot|\bm n|_1\cdot r^{\Delta(\bm n)+|\bm n|_1-1}}.
\end{align}
\end{defn}

\begin{rem}
	It is easy to see that
	\begin{equation*}
	\left\| W\right\|_{j_0,N_1,r_1}\leq \left\| W\right\|_{j_0,N_2,r_2},
	\end{equation*}
if 
\begin{equation*}
	N_1\leq N_2\quad \mbox{and}\quad r_1\leq r_2.
\end{equation*}
\end{rem}

\begin{defn}
The Poisson bracket of $W$ and $U$ is defined as
$$
\{W,U\}:={\rm{i}}\sum_{\bm j\in{\mathbb{Z}^d}}\left(\frac{\partial W}{\partial q_{\bm j}}\cdot\frac{\partial U}{\partial \bar q_{\bm j}}-\frac{\partial W}{\partial \bar q_{\bm j}}\cdot\frac{\partial U}{\partial q_{\bm j}}\right).
$$
%\begin{equation*}\label{022201}
%\{H,G\}:=\mathbf{i}\sum_{n,m\in\mathbb{N}^{\mathbb{Z}^d}\times\mathbb{N}^{\mathbb{Z}^d}}\sum_{k\in\mathbb{Z}^d}H(n)G(m)(n_km_k'-n_k'm_k)q_k^{n_k+m_k-1}\bar {q}_k^{n_k'+m_k'-1}\left(\prod_{j\neq k}q_j^{n_j+m_j}\bar {q}_j^{n_j'+m_j'}\right).
%\end{equation*}
\end{defn}

We have the following key estimate.

\begin{prop}[\textbf{Poisson Bracket}]\label{090603}
For $j_0,N\in\mathbb{N}$, let 
\begin{align*}
	W(q,\bar{q})=\sum_{\bm n\in\mathbb{N}^{\mathbb{Z}^d}\times\mathbb{N}^{\mathbb{Z}^d}\atop {\rm supp}\ \bm n\subset A(j_0,N)}W(\bm n)\prod_{{\rm supp}\ \bm n}q_{\bm j}^{\bm n_{\bm j}}\bar {q}_{\bm j}^{\bm n_{\bm j}'},
\end{align*}
and
\begin{equation*}
	U(q,\bar{q})=\sum_{\bm m\in\mathbb{N}^{\mathbb{Z}^d}\times\mathbb{N}^{\mathbb{Z}^d}}U(\bm m)\prod_{{\rm supp}\ \bm m}q_{\bm j}^{\bm m_{\bm j}}\bar {q}_{\bm j}^{\bm m_{\bm j}'}.	
\end{equation*}
Then for any $0<\sigma<r/2$, we have
\begin{align}\label{122002}
\left\|\left\{ W, U\right\}\right\|_{j_0,N,r-\sigma}\leq \frac{1}\sigma\left\|W\right\|_{j_0,N,r}\cdot\left\|U\right\|_{j_0,N,r}.
\end{align}
\end{prop}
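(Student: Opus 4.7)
The plan is to reduce \eqref{122002} to a monomial-by-monomial Poisson bracket computation, and then control four ingredients in sequence: (i) the support of the output monomial, (ii) the values of $|\cdot|_1$ and $\Delta(\cdot)$ on it, (iii) the combinatorial factor produced by the two derivatives, and (iv) the standard ``derivative loss'' $r\mapsto r-\sigma$ that is converted into the factor $1/\sigma$ via an elementary exponential estimate. This is the Cauchy-type scheme familiar from the Birkhoff normal form literature, adapted to the weighted norm $\|\cdot\|_{j_0,N,r}$.

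By bilinearity of $\{\cdot,\cdot\}$ it suffices to treat pairs of single monomials $W_{\bm n}=W(\bm n)\prod q_{\bm j}^{n_{\bm j}}\bar q_{\bm j}^{n_{\bm j}'}$ and $U_{\bm m}=U(\bm m)\prod q_{\bm j}^{m_{\bm j}}\bar q_{\bm j}^{m_{\bm j}'}$. Direct differentiation yields
\begin{equation*}
\{W_{\bm n},U_{\bm m}\}={\rm i}\,W(\bm n)U(\bm m)\sum_{\bm j\in{\rm supp}\,\bm n\cap{\rm supp}\,\bm m}(n_{\bm j}m_{\bm j}'-n_{\bm j}'m_{\bm j})\prod_{\bm k}q_{\bm k}^{p_{\bm k}(\bm j)}\bar q_{\bm k}^{p_{\bm k}'(\bm j)},
\end{equation*}
where the output multi-index $\bm p(\bm j)$ satisfies $|\bm p(\bm j)|_1=|\bm n|_1+|\bm m|_1-2$ and ${\rm supp}\,\bm p(\bm j)\subseteq{\rm supp}\,\bm n\cup{\rm supp}\,\bm m$. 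The hypothesis ${\rm supp}\,\bm n\subseteq A(j_0,N)$ forces every contributing $\bm j$ to lie in $A(j_0,N)$, which both places $\bm p(\bm j)$ into the summation range of $\|\cdot\|_{j_0,N,r-\sigma}$ and guarantees ${\rm supp}\,\bm m\cap A(j_0,N)\ne\emptyset$, so that $\bm m$ already appears in the sum defining $\|U\|_{j_0,N,r}$. Since the common site $\bm j$ lies in both supports, $\Delta(\bm p(\bm j))\leq\Delta(\bm n)+\Delta(\bm m)$.

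For the coefficient, $|n_{\bm j}m_{\bm j}'-n_{\bm j}'m_{\bm j}|\leq n_{\bm j}m_{\bm j}'+n_{\bm j}'m_{\bm j}$ and the crude bound $\sum_{\bm j}(n_{\bm j}m_{\bm j}'+n_{\bm j}'m_{\bm j})\leq|\bm n|_1|\bm m|_1$. Combining with $\Delta(\bm p)+|\bm p|_1-1\leq P-3$, where $P:=\Delta(\bm n)+\Delta(\bm m)+|\bm n|_1+|\bm m|_1$, and using $r-\sigma>1$ (which holds since $r>2$ and $\sigma<r/2$), I arrive at
\begin{equation*}
\|\{W_{\bm n},U_{\bm m}\}\|_{j_0,N,r-\sigma}\leq |W(\bm n)U(\bm m)|\cdot|\bm n|_1|\bm m|_1\cdot(|\bm n|_1+|\bm m|_1)(r-\sigma)^{P-3}.
\end{equation*}

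The corresponding contribution to the product $\|W\|_{j_0,N,r}\|U\|_{j_0,N,r}$ is $|W(\bm n)U(\bm m)||\bm n|_1|\bm m|_1\, r^{P-2}$, so \eqref{122002} reduces to the elementary estimate
\begin{equation*}
(|\bm n|_1+|\bm m|_1)\left(1-\tfrac{\sigma}{r}\right)^{P-3}\leq C\cdot\frac{r}{\sigma},\qquad 0<\sigma<\tfrac{r}{2},
\end{equation*}
which follows from $P\geq|\bm n|_1+|\bm m|_1$ together with $xe^{-x}\leq e^{-1}$ applied to $x=(P-3)\sigma/r$ (the handful of cases $P\leq 3$ are trivial via $(1-\sigma/r)^{-1}\leq 2$). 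Summing over all $\bm n,\bm m$ gives the proposition. The main (and essentially only) obstacle is the combinatorial bookkeeping in the second paragraph: checking that every contributing $\bm j$ is forced into $A(j_0,N)$ by the support constraint on $W$, and that the common site $\bm j$ in ${\rm supp}\,\bm n\cap{\rm supp}\,\bm m$ yields $\Delta(\bm p)\leq\Delta(\bm n)+\Delta(\bm m)$. Once these are verified, the derivative-loss step producing $1/\sigma$ is standard.
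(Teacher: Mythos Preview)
Your approach is essentially the same as the paper's: both reduce to a monomial-by-monomial computation, both observe that only terms of $U$ with ${\rm supp}\,\bm m\cap A(j_0,N)\ne\emptyset$ contribute (since ${\rm supp}\,\bm n\subset A(j_0,N)$ forces the common site $\bm j$ into $A(j_0,N)$), both use $\Delta(\bm l)\le\Delta(\bm n)+\Delta(\bm m)$ via that common site, and both finish with the elementary calculus inequality that converts the shrinkage $r\mapsto r-\sigma$ into the factor $1/\sigma$.

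There is one small but genuine wrinkle: by dropping the $-2$ in $|\bm p|_1=|\bm n|_1+|\bm m|_1-2$ you end up needing $(|\bm n|_1+|\bm m|_1)(1-\sigma/r)^{P-3}\le C\,r/\sigma$, and your $xe^{-x}\le e^{-1}$ argument only gives this with a constant $C>1$, which does not quite match the sharp $1/\sigma$ stated in \eqref{122002}. The paper handles this cleanly by keeping the exact factor $|\bm l|_1=|\bm n|_1+|\bm m|_1-2$ and first absorbing $(r-\sigma)^{\Delta(\bm n)+\Delta(\bm m)}\le r^{\Delta(\bm n)+\Delta(\bm m)}$; the remaining inequality is then precisely
\[
k(r-\sigma)^{k-1}\le \frac{r^{k}}{\sigma},\qquad k:=|\bm n|_1+|\bm m|_1-2,
\]
which holds with constant $1$ because $\sigma\mapsto k\sigma(r-\sigma)^{k-1}$ attains its maximum $r^{k}(1-1/k)^{k-1}\le r^{k}$ at $\sigma=r/k$. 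With this adjustment your argument coincides with the paper's.
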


\begin{proof}
First of all, we write
\begin{align*}
\{W,U\}=\sum_{\bm l\in\mathbb{N}^{\mathbb{Z}^d}\times\mathbb{N}^{\mathbb{Z}^d}}\{W,U\}
(\bm l)\prod_{{\rm supp}\ \bm l}q_{\bm j}^{\bm l_{\bm j}}\bar{q}_{\bm j}^{\bm l_{\bm j}'}.
\end{align*}
Then one has
\begin{align}\label{011401}
\{W,U\}
(\bm l)={\rm{i}}\sum_{\bm k\in\mathbb{Z}^d}\left(\sum_{\bm n,\bm m\in\mathbb{N}^{\mathbb{Z}^d}\times\mathbb{N}^{\mathbb{Z}^d}}^*W(\bm n)U(\bm m)\left(\bm n_{\bm k}\bm m'_{\bm k}-\bm n_{\bm k}'\bm m_{\bm k}\right) \right)
\end{align}
and the sum $\sum\limits_{\bm n,\bm m\in\mathbb{N}^{\mathbb{Z}^d}\times\mathbb{N}^{\mathbb{Z}^d}}^*$  is taken as
\begin{align*}
&\bm l_{\bm j}=\bm n_{\bm j}+\bm m_{\bm j}-1, \quad \bm l'_{\bm j}=\bm n'_{\bm j}+\bm m'_{\bm j}-1\ {\rm for}\ \bm j=\bm k,\\
&\bm l_{\bm j}=\bm n_{\bm j}+\bm m_{\bm j},\quad  \bm l'_{\bm j}=\bm n'_{\bm j}+\bm m'_{\bm j }\  {\rm for}\ \bm j\neq \bm k.
\end{align*}

Secondly, let
\begin{align*}
\widetilde U=\sum_{\bm m\in\mathbb{N}^{\mathbb{Z}^d}\times\mathbb{N}^{\mathbb{Z}^d}\atop
{\rm supp}\ \bm m\cap A(j_0,N)=\emptyset}U(\bm m)\prod_{{\rm supp}\ \bm m}q_{\bm j}^{\bm m_{\bm j}}\bar {q}_{\bm j}^{\bm m'_{\bm{j}}}, 
\end{align*}
then one has $\left\{W,\widetilde U\right\}=0.$
Hence, we can always assume that
\begin{align}\label{022401}
U=\sum_{\bm m\in\mathbb{N}^{\mathbb{Z}^d}\times\mathbb{N}^{\mathbb{Z}^d}\atop
	{\rm supp}\ \bm m\cap A(j_0,N)\neq\emptyset}U(\bm m)\prod_{{\rm supp}\ \bm m}q_{\bm j}^{\bm m_{\bm j}}\bar {q}_{\bm j}^{\bm m'_{\bm{j}}}.
\end{align}

Without loss of generality, we assume that $W$ and $U$ are homogeneous polynomials with degrees $n^*$ and $m^*$ respectively, i.e.,
\begin{align*}
W(q,\bar{q})=\sum_{\bm n\in\mathbb{N}^{\mathbb{Z}^d}\times\mathbb{N}^{\mathbb{Z}^d},\left|\bm n\right|_1=n^*\atop {\rm supp}\ \bm n\subset A(j_0,N)}W(\bm n)\prod_{{\rm supp}\ \bm n}q_{\bm j}^{\bm n_{\bm j}}\bar {q}_{\bm j}^{\bm n_{\bm j}'},
\end{align*}
and
\begin{align*}
U(q,\bar{q})=\sum_{\bm m\in\mathbb{N}^{\mathbb{Z}^d}\times\mathbb{N}^{\mathbb{Z}^d},\left|\bm m\right |_1=m^*\atop
{\rm supp}\ \bm m\cap A(j_0,N)\neq \emptyset}U(\bm m)\prod_{{\rm supp}\ \bm m}q_{\bm j}^{\bm m_{\bm j}}\bar {q}_{\bm j}^{\bm m_{\bm j}'}.
\end{align*}

Since $r>2$ and $0<\sigma<r/2$, one has
\begin{align}\label{022202}
1<r-\sigma<r.
\end{align}
In view of \eqref{022202} and
\begin{align*}
\Delta (\bm l)\leq \Delta(\bm n)+\Delta(\bm m),
\end{align*} one has
\begin{align}
&\nonumber\sum_{\bm l\in\mathbb{N}^{\mathbb{Z}^d}\times\mathbb{N}^{\mathbb{Z}^d}}
\left| \sum_{\bm k\in\mathbb{Z}^d} \sum_{\bm n,\bm m\in\mathbb{N}^{\mathbb{Z}^d}\times\mathbb{N}^{\mathbb{Z}^d}}^*W(\bm n)U(\bm m)\left(\bm n_{\bm k}\bm m'_{\bm k}-\bm n_{\bm k}'\bm m_{\bm k}\right) \right|\left(r-\sigma\right)^{\Delta(\bm l)} \\
\leq&\nonumber\sum_{\bm n,\bm m\in\mathbb{N}^{\mathbb{Z}^d}\times\mathbb{N}^{\mathbb{Z}^d}}
 \left|W(\bm n)\right|\left|U(\bm m)\right|\sum_{\bm k\in\mathbb{Z}^d}\left(\bm n_{\bm k}\bm m'_{\bm k}+\bm n_{\bm k}'\bm m_{\bm k}\right) \left(r-\sigma\right)^{\Delta(\bm n)+\Delta(\bm m)} \\
\leq&\label{022203}\left(\sum_{\bm n\in\mathbb{N}^{\mathbb{Z}^d}\times\mathbb{N}^{\mathbb{Z}^d}}|W(\bm n)|\cdot |\bm n|_1\cdot r^{\Delta(\bm n)}\right)
\left(\sum_{\bm m\in\mathbb{N}^{\mathbb{Z}^d}\times\mathbb{N}^{\mathbb{Z}^d}}|U(\bm m)|\cdot |\bm m|_1\cdot r^{\Delta(\bm m)}\right).
\end{align}
In view of (\ref{011401}), (\ref{022203}) and using
$|\bm l|_1=|\bm n|_1+|\bm m|_1-2,$ we have
\begin{align*}
&\left\|\left\{W,U\right\}\right\|_{j_0,N,r-\sigma}\\
\leq&\left(|\bm n|_1+|\bm m|_1-2\right)(r-\sigma)^{|\bm n|_1+|\bm m|_1-3}\\
\ \ &\times\left(\sum_{\bm n\in\mathbb{N}^{\mathbb{Z}^d}\times\mathbb{N}^{\mathbb{Z}^d}}|W(\bm n)|\cdot |\bm n|_1\cdot r^{\Delta(\bm n)}\right)
\left(\sum_{\bm m\in\mathbb{N}^{\mathbb{Z}^d}\times\mathbb{N}^{\mathbb{Z}^d}}|U(\bm m)|\cdot |\bm m|_1\cdot r^{\Delta(\bm m)}\right)\\
%\leq&\left\|\left\{ W, G\right\}\right\|_{j_0,N,r-\sigma}\\
\leq& \frac{1}\sigma\left\|W\right\|_{j_0,N,r}\cdot\left\|U\right\|_{j_0,N,r},
\end{align*}
which finishes the proof of (\ref{122002}), and where the last inequality is based on the following inequality:
\begin{align*}
	(|\bm n|_1+|\bm m|_1-2)(r-\sigma)^{|\bm n|_1+|\bm m|_1-3}\leq \frac{1}{\sigma}r^{|\bm n|_1+|\bm m|_1-2}.
\end{align*}
\end{proof}
Furthermore, one has 
\begin{prop}\label{E1}
Let $W$ and $U$ be given by Proposition \ref{090603}.  Assume further that
\begin{align}\label{042801}
\left(\frac{e}{\sigma}\right)\left\|W\right\|_{j_0,N,r}\leq \frac12.
\end{align}
Then
\begin{align*}
\left\|U\circ X_W^1\right\|_{j_0,N,r-\sigma}
\leq2\left\|U\right\|_{j_0,N,r},
\end{align*}
where $X_W^1$ is the time-$1$ map generated by the flow of $W$.\end{prop}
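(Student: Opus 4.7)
The plan is to represent $U \circ X_W^1$ through its Lie series
\begin{equation*}
U\circ X_W^1 \;=\; \sum_{k=0}^{\infty}\frac{1}{k!}\,\mathrm{ad}_W^{k}(U), \qquad \mathrm{ad}_W^{0}(U)=U,\quad \mathrm{ad}_W^{k}(U)=\{W,\mathrm{ad}_W^{k-1}(U)\},
\end{equation*}
and to estimate each term by iterating the Poisson bracket bound of Proposition \ref{090603}, distributing the total radius loss $\sigma$ equally over the $k$ brackets. The smallness hypothesis \eqref{042801} will eventually guarantee absolute convergence of this series in the $\|\cdot\|_{j_0,N,r-\sigma}$ norm, which simultaneously justifies the Lie expansion of $U\circ X_W^1$.

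For a fixed $k\geq 1$, I set the interpolating radii $r_j=r-j\sigma/k$ for $j=0,\dots,k$, so that $r_0=r$, $r_k=r-\sigma$, and $r_{j-1}-r_j=\sigma/k$. At each step $1\leq j\leq k$, Proposition \ref{090603} applies (the outer argument $W$ has support in $A(j_0,N)$ throughout, so the hypothesis is preserved) and yields
\begin{equation*}
\|\mathrm{ad}_W^{j}(U)\|_{j_0,N,r_j} \;\leq\; \frac{k}{\sigma}\,\|W\|_{j_0,N,r_{j-1}}\,\|\mathrm{ad}_W^{j-1}(U)\|_{j_0,N,r_{j-1}}.
\end{equation*}
Chaining these inequalities and invoking the monotonicity $\|W\|_{j_0,N,r_{j-1}}\leq \|W\|_{j_0,N,r}$ (the remark following Definition \ref{122403}) gives
\begin{equation*}
\|\mathrm{ad}_W^{k}(U)\|_{j_0,N,r-\sigma}\;\leq\;\Big(\frac{k}{\sigma}\Big)^{k}\|W\|_{j_0,N,r}^{k}\,\|U\|_{j_0,N,r}.
\end{equation*}

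Summing the Lie series and applying the elementary inequality $k^{k}/k!\leq e^{k}$ (which follows from expanding $e^{k}=\sum_{j\geq 0}k^j/j!$) then gives
\begin{equation*}
\|U\circ X_W^1\|_{j_0,N,r-\sigma}\;\leq\;\|U\|_{j_0,N,r}\sum_{k=0}^{\infty}\frac{1}{k!}\Big(\frac{k}{\sigma}\Big)^{k}\|W\|_{j_0,N,r}^{k}\;\leq\;\|U\|_{j_0,N,r}\sum_{k=0}^{\infty}\Big(\frac{e}{\sigma}\|W\|_{j_0,N,r}\Big)^{k}.
\end{equation*}
Under \eqref{042801}, each term of the geometric series is bounded by $2^{-k}$, so the sum is at most $2$, yielding the claim.

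The main obstacle is the bookkeeping: one must choose the radius schedule $r_j$ precisely so that the $k$-fold iteration of Proposition \ref{090603} produces the sharp factor $(k/\sigma)^{k}$, which then pairs with $k!$ via the Stirling-type bound $k^{k}/k!\leq e^{k}$ to yield a geometric series. With an uneven distribution of the losses, the resulting factorial-like constants would not cancel and the Lie series would typically diverge; this is why the evenly spaced radii are the right choice. Everything else is routine: the support condition ${\rm supp}\,\bm n\subset A(j_0,N)$ on $W$ is unchanged under every bracket (since the same $W$ appears as the left argument throughout), so no additional hypothesis has to be propagated, and the constant $e/\sigma$ in \eqref{042801} is tuned precisely to make the final geometric series summable with ratio $1/2$.
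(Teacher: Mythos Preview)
Your argument is correct and is the standard Lie series estimate: expand $U\circ X_W^1$ as $\sum_{k\geq 0}\frac{1}{k!}\mathrm{ad}_W^k(U)$, distribute the radius loss evenly as $r_j=r-j\sigma/k$, iterate Proposition~\ref{090603} to obtain the bound $(k/\sigma)^k\|W\|_{j_0,N,r}^k\|U\|_{j_0,N,r}$, and then use $k^k/k!\leq e^k$ together with \eqref{042801} to sum the resulting geometric series. The paper in fact states Proposition~\ref{E1} (and the companion estimates \eqref{9}--\eqref{10}) without proof, treating them as standard consequences of the Poisson bracket bound; your write-up supplies exactly the argument one would expect, and the bookkeeping (monotonicity of the norm in $r$, the support condition on $W$ being preserved through the iteration, and the condition $r_j>r-\sigma>r/2>1$ guaranteeing each application of Proposition~\ref{090603} is legitimate) is all in order.
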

In general, we have
\begin{align}\label{9}
\left\|U\circ X_{W}^1-U\right\|_{j_0,N,r-\sigma}\leq\frac{e}{\sigma}\cdot
\left\|W\right\|_{j_0,N,r}\cdot
\left\|U\right\|_{j_0,N,r},
\end{align}
and
\begin{align}\label{10}
\left\|U\circ X_W^1-U-\{U,W\}\right\|_{j_0,N,r-\sigma}\leq\left(\frac{e}{\sigma}\right)^2\left\|W\right\|_{j_0,N,r}^2\cdot
\left\|U\right\|_{j_0,N,r}.
\end{align}

\section{Analysis  of the Symplectic Transformations}

We now construct the symplectic transformation $\Gamma$ by a finite step induction.

At the first step, i.e., $s=1$ (in view of (\ref{010703})),
\begin{align}\label{011410}
H_1=H=\frac{1}2\left(\sum_{\bm j\in\mathbb{Z}^d}V_{\bm j}(\bm\theta,\bm\alpha)|q_{\bm j}|^2+\epsilon_1\sum_{\bm i,\bm j\in\mathbb{Z}^d\atop |\bm i-\bm j|_1=1}q_{\bm i}\bar q_{\bm j}+\frac12\epsilon_2\sum_{\bm j\in\mathbb{Z}^d}|q_{\bm j}|^4\right).
\end{align}
Then (\ref{011410}) can be rewritten as
\begin{align*}
\nonumber H_1&=D+Z_1+R_1, 
\end{align*}
where
\begin{equation*}
%\label{122103}
D=\frac12\sum_{\bm j\in\mathbb{Z}^d}V_{\bm j}(\bm\theta,\bm\alpha)|q_{\bm j}|^2,
\end{equation*}
\begin{equation*}
Z_1=\frac{\epsilon_2}4\sum_{\bm j\in\mathbb{Z}^d}|q_{\bm j}|^4,
\end{equation*}
and 
\begin{equation*}
R_1=\frac{\epsilon_1}2\sum_{\bm i,\bm j\in\mathbb{Z}^d\atop |\bm i-\bm j|=1}q_{\bm i}\bar q_{\bm j}.
\end{equation*}
Let $M$ be given as in Theorem \ref{main} and fix any $r>2$. From (\ref{122402}), we have that
\begin{align}\label{122502}
\left\|H_1-D\right\|_{j_0,M^2,r}\leq \epsilon^{0.99},
\end{align}
using $\epsilon=\epsilon_1+\epsilon_2$ small enough.

\subsection{One step of Birkhoff normal form}

\begin{lem}\label{122501}
Let $V=(V_{\bm j}(\bm\theta,\bm\alpha))_{\bm j\in\mathbb{Z}^d}$ satisfy the $(\gamma,L,M,j_0)$-nonresonant conditions \eqref{122106}. 
Then there exists a change of variables $\Gamma_1:=X_{F_1}^1$ such that
\begin{align*}
\nonumber H_{2}&=H_1\circ X_{F_1}^1=D+Z_2+R_2\\
%\label{122103}
&=\frac12\sum_{\bm j\in\mathbb{Z}^d}V_{\bm j}(\bm\theta,\bm\alpha)|q_{\bm j}|^2
%\label{122102}
+\sum_{\bm n\in\mathbb{N}^{\mathbb{Z}^d}\times\mathbb{N}^{\mathbb{Z}^d}\atop \bm n\in\mathcal{N}}Z_{2}(\bm n)\prod_{{\rm supp}\ \bm n}\left|q_{\bm j}^{\bm n_{\bm j}}\right|^2
%\label{122101}
 +\sum_{\bm n\in\mathbb{N}^{\mathbb{Z}^d}\times\mathbb{N}^{\mathbb{Z}^d}}R_{2}(\bm n)\prod_{{\rm supp}\ \bm n}q_{\bm j}^{\bm n_{\bm j}}\bar q_{\bm j}^{\bm n_{\bm j}'}.
\end{align*}
Moreover, one has
\begin{align}
\label{122507}\left\|F_1\right\|_{j_0,M^2,r}&\leq\epsilon^{0.95},\\
\label{122601}\left\|Z_2\right\|_{j_0,M^2,r-\sigma}&\leq \epsilon^{0.9}\left(\sum_{i=0}^{1}2^{-i}\right),\\
\label{122603}\left\|R_2\right\|_{j_0,M^2,r-\sigma}&\leq \epsilon^{0.9}\left(\sum_{i=0}^{1}2^{-i}\right),
\end{align}
and
\begin{align}\label{011403}
\left\|\mathcal{R}_2\right\|_{j_0,M^2,r-\sigma}\leq \epsilon^{1.9},
\end{align}
where
\begin{align}\label{122604}
\mathcal{R}_2=\sum_{\bm n\in\mathbb{N}^{\mathbb{Z}^d}\times\mathbb{N}^{\mathbb{Z}^d}}R_{2}(\bm n)\prod_{{\rm supp}\ \bm n\cap A(j_0,M^2-40)\neq\emptyset}q_{\bm j}^{\bm n_{\bm j}}\bar q_{\bm j}^{\bm n_{\bm j}'}.
\end{align}
Furthermore, for any $A\geq 3$ the following estimate holds
\begin{align}\label{122605}
\left\|\sum_{\Delta(\bm n)+|\bm n|=A}\left(|Z_2(\bm n)|+|R_2(\bm n)|\right)\prod_{{\rm supp}\ \bm n}q_{\bm j}^{\bm n_{\bm j}}\bar q_{\bm j}^{\bm n_{\bm j}'}\right
\|_{j_0,M^2,r-\sigma}
\leq\epsilon^{1+0.9(A-3)}.
\end{align}
\end{lem}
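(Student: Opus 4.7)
The plan is to carry out one Birkhoff normal-form step by solving the homological equation $\{D, F_1\} = -R_1^{\mathrm{int}}$ inside the barrier $A(j_0, M^2)$, where $R_1^{\mathrm{int}}$ denotes the restriction of $R_1$ to pairs $\{\bm i,\bm j\}$ with both sites in $A(j_0, M^2)$. Since every monomial of $R_1$ has the form $q_{\bm i}\bar q_{\bm j}$ with $|\bm i - \bm j|_1 = 1$ and is therefore automatically non-resonant, I would set
\[
F_1 \;=\; \sum_{\substack{|\bm i - \bm j|_1 = 1 \\ \{\bm i,\bm j\}\subset A(j_0, M^2)}} c_{\bm i \bm j}\, q_{\bm i}\bar q_{\bm j},
\qquad c_{\bm i\bm j} \;=\; \frac{\mathrm{i}\,\epsilon_1}{V_{\bm j}(\bm\theta,\bm\alpha) - V_{\bm i}(\bm\theta,\bm\alpha)},
\]
so that the identity $\{D, q_{\bm i}\bar q_{\bm j}\} = \tfrac{\mathrm{i}}{2}(V_{\bm j} - V_{\bm i})q_{\bm i}\bar q_{\bm j}$ yields $\{D, F_1\} + R_1^{\mathrm{int}} = 0$ by direct computation.

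To obtain \eqref{122507}, I would invoke the $(\gamma, L, M, j_0)$-nonresonant condition \eqref{122106} with divisor vector $\bm k = \bm e_{\bm i} - \bm e_{\bm j}$, which gives $|V_{\bm j} - V_{\bm i}|^{-1} \leq C(\gamma, L, M, j_0)$. Combining this with $\|R_1\|_{j_0, M^2, r} \leq \epsilon^{0.99}$ from \eqref{122502} yields $\|F_1\|_{j_0, M^2, r} \leq C\cdot \epsilon^{0.99} \leq \epsilon^{0.95}$ once $\epsilon$ is taken small enough relative to $(\gamma, L, M, j_0)$. In particular the smallness hypothesis \eqref{042801} of Proposition \ref{E1} is satisfied, so the time-one map $X_{F_1}^1$ is well defined on the relevant domain.

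Next I would decompose
\[
H_2 \;=\; D + Z_1 + (R_1 - R_1^{\mathrm{int}}) + \{Z_1 + R_1, F_1\} + \bigl(H_1 \circ X_{F_1}^1 - H_1 - \{H_1, F_1\}\bigr),
\]
and let $Z_2 - Z_1$ and $R_2$ be the resonant and non-resonant projections of everything beyond $D + Z_1$. The straddling remainder $R_1 - R_1^{\mathrm{int}}$ has support intersecting $A(j_0, M^2)$ only at boundary sites $\bm k$ with $||\bm k| - j_0| \geq M^2 - 1$, so it contributes to $R_2$ but lies entirely outside $A(j_0, M^2 - 40)$, hence is absent from $\mathcal R_2$. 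Proposition \ref{090603} then gives $\|\{Z_1 + R_1, F_1\}\|_{j_0, M^2, r-\sigma} \leq \sigma^{-1}\cdot\epsilon^{0.99}\cdot\epsilon^{0.95} \lesssim \epsilon^{1.9}$, while \eqref{10} controls the quadratic Taylor remainder by the same order. Together with $\|Z_1\|_{j_0, M^2, r-\sigma} \leq \epsilon^{0.99}$ and $\|R_1 - R_1^{\mathrm{int}}\|_{j_0, M^2, r-\sigma} \leq \epsilon^{0.99}$, these estimates give \eqref{122601}--\eqref{122603}; dropping the straddling contribution (which does not reach $A(j_0, M^2 - 40)$) then yields \eqref{011403}.

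Finally, for \eqref{122605} I would expand $H_1\circ X_{F_1}^1$ into its iterated-Poisson-bracket series and apply Proposition \ref{090603} degree by degree: each additional bracket with $F_1$ multiplies the relevant norm by $\sigma^{-1}\|F_1\| \leq \sigma^{-1}\epsilon^{0.95}$ and raises the combined degree $\Delta(\bm n)+|\bm n|_1$ by a bounded amount, so summing the geometric series produces the claimed $\epsilon^{1+0.9(A-3)}$ bound on the degree-$A$ slice. The main obstacle I anticipate is the bookkeeping needed to confirm that iterated Poisson brackets involving the straddling leftover $R_1 - R_1^{\mathrm{int}}$ never transport support into $A(j_0, M^2 - 40)$, so that the quadratic improvement \eqref{011403} is preserved; the $40$-site collar separating $A(j_0, M^2)$ from $A(j_0, M^2 - 40)$ is precisely the buffer that absorbs the geometric spreading of monomial supports under Taylor expansion.
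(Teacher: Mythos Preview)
Your approach is essentially the paper's: solve the homological equation for the part of $R_1$ living in the barrier, then use Propositions~\ref{090603} and~\ref{E1} (and \eqref{9}--\eqref{10}) to control the Lie series remainder. The only cosmetic difference is where the cutoff is taken: you keep monomials with \emph{both} sites in $A(j_0,M^2)$, whereas the paper keeps monomials whose support \emph{intersects} $A(j_0,M^2-20)$ (so that $F_1$ is automatically supported in $A(j_0,M^2)$, matching the hypothesis of Proposition~\ref{090603}); and the paper simply records $Z_2=Z_1$ rather than taking resonant/non-resonant projections.

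One clarification on the point you flag as the ``main obstacle'': the bookkeeping you anticipate is not needed. In the decomposition
\[
R_2 \;=\; (R_1 - R_1^{\mathrm{int}}) \;+\; \bigl(D\circ X_{F_1}^1 - D - \{D,F_1\}\bigr) \;+\; \bigl((Z_1+R_1)\circ X_{F_1}^1 - (Z_1+R_1)\bigr),
\]
every term except the first is already $O\!\left(\sigma^{-1}\|F_1\|_{j_0,M^2,r}\,\|H_1-D\|_{j_0,M^2,r}\right)\le \epsilon^{1.9}$ by \eqref{9}--\eqref{10} (using $\{D,F_1\}=-R_1^{\mathrm{int}}$ so that the higher brackets of $D$ become iterated brackets of $R_1^{\mathrm{int}}$). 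These higher-order contributions therefore enter $\mathcal{R}_2$ with the correct size \emph{regardless} of where their support lands. The only term that must avoid $A(j_0,M^2-40)$ is the zeroth-order leftover $R_1-R_1^{\mathrm{int}}$, and you already checked that its support sits at $||\bm j|-j_0|>M^2-1$. So the $40$-site collar is just a (generous) safety margin for that single term, not a device to control support transport under iterated brackets.
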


\begin{proof}
By the Birkhoff normal form theory,  $F_1$ satisfies the homological equation
\begin{align}\label{hom1}
L_{V} F_1=\mathcal{R}_1,
\end{align}
where the \textit{Lie derivative} operator is defined by
\begin{align*}
L_{V}:\;W\mapsto L_{V}W:={\rm{i}}\sum_{\bm n\in\mathbb{N}^{\mathbb{Z}^d}\times\mathbb{N}^{\mathbb{Z}^d}}\left(\sum_{\bm j\in\mathbb{Z}^d}(\bm n_{\bm j}-\bm n_{\bm j}')V_{\bm j},\right)W(\bm n)\prod_{{\rm supp}\ \bm n}q_{\bm j}^{\bm n_{\bm j}}\bar q_{\bm j}^{\bm n_{\bm j}'},
\end{align*}
and
\begin{align*}
\mathcal{R}_1=\sum_{\bm n\in\mathbb{N}^{\mathbb{Z}^d}\times\mathbb{N}^{\mathbb{Z}^d}}R_1(\bm n)\prod_{{\rm supp}\ \bm n\cap A(j_0,M^2-20)\neq \emptyset}q_{\bm j}^{\bm n_{\bm j}}\bar q_{\bm j}^{\bm n_{\bm j}'}.
\end{align*}
Unless $\bm n\in\mathcal{N}$ (see (\ref{030301})), one has
\begin{align*}%\label{122710}
F_1(\bm n)=\frac{R_1(\bm n)}{\sum_{\bm j\in\mathbb{Z}^d}(\bm n_{\bm j}-\bm n_{\bm j}')V_{\bm j}}.
\end{align*}
Note that frequency $V$ satisfies the nonresonant condition (\ref{122106}).  Using $\epsilon$ small enough we have
\begin{align}\label{122503} |F_1(\bm n)|\leq {|R_1(\bm n)|}\epsilon^{-0.01},\end{align}
and
then
\begin{eqnarray*}
\left\|F_1\right\|_{j_0,M^2,r}\leq\epsilon^{0.95},
\end{eqnarray*}which completes the proof of (\ref{122507}).

Using Taylor's formula yields
\begin{align*}
\nonumber H_{2}&:=H_1\circ X_{F_1}^1=D+Z_1\\
\nonumber &\ +\{D,F_1\}+R_1+\left(X_{F_1}^1-\textbf{id}-\{\cdot,F_1\}\right)D
+\left(X_{F_1}^1-\textbf{id}\right)(Z_1+R_1)\\
\nonumber&=D+Z_2+R_2\\
%\label{122103}
&:=\frac12\sum_{\bm j\in\mathbb{Z}^d}V_{\bm j}|q_{\bm j}|^2
%\nonumber&
+\sum_{\bm n\in\mathbb{N}^{\mathbb{Z}^d}\times\mathbb{N}^{\mathbb{Z}^d}\atop \bm n\in\mathcal{N}}Z_{2}(\bm n)\prod_{{\rm supp}\ n}\left|q_{\bm j}^{\bm n_{\bm j}}\right|^2
%\label{122101}
+\sum_{\bm n\in\mathbb{N}^{\mathbb{Z}^d}\times\mathbb{N}^{\mathbb{Z}^d}}R_{2}(\bm n)\prod_{{\rm supp}\ \bm n}q_{\bm j}^{\bm n_{\bm j}}\bar q_{\bm j}^{\bm n_{\bm j}'},
\end{align*}
where by \eqref{hom1},
\begin{align}
\nonumber R_2 &=(R_1-\mathcal{R}_1)+\left(X_{F_1}^1-\textbf{id}-\{\cdot,F_1\}\right)D+\left(X_{F_1}^1-\textbf{id}\right)(Z_1+R_1)\\
%\label{022301}
\nonumber &=\sum_{\bm n\in\mathbb{N}^{\mathbb{Z}^d}\times\mathbb{N}^{\mathbb{Z}^d}}R_{2}(\bm n)\prod_{{\rm supp}\ \bm n}q_{\bm j}^{\bm n_{\bm j}}\bar q_{\bm j}^{\bm n_{\bm j}'},
\end{align}
and
\begin{align*}
&\left(X_{F_1}^1-\textbf{id}-\{\cdot,F_1\}\right)D:=D\circ X_{F_1}^1-D-\{D,F_1\},\\
&\left(X_{F_1}^1-\textbf{id}\right)(Z_1+R_1):=(Z_1+R_1)\circ X_{F_1}^1-(Z_1+R_1).
\end{align*}
In this step, we have $Z_2=Z_1$. Hence, the estimate (\ref{122601}) holds true.

Write
\begin{align*}
R_2=\mathcal{R}_2+(R_2-\mathcal{R}_2),
\end{align*}
where $\mathcal{R}_2$ is defined by (\ref{122604}).
By (\ref{9}), (\ref{10}) and  \eqref{hom1}, for any $0<\sigma<r/2$ one has
\begin{align*}
\left\|\mathcal{R}_2\right\|_{j_0,M^2,r-\sigma}&\leq\nonumber \left(\frac{e}{\sigma}\right)\cdot\left\|F_1\right\|_{j_0,M^2,r}\cdot\left\|H_1-D\right\|_{j_0,M^2,r}\leq\epsilon^{1.9},
\end{align*}
where the last inequality follows from (\ref{122502}) and (\ref{122507}). This finishes the proof of (\ref{011403}).
Similarly, we have
\begin{align*}
\left\|R_2-\mathcal{R}_2\right\|_{j_0,M^2,r-\sigma}\leq \epsilon^{0.9}\left(\sum_{i=0}^{1}2^{-i}\right).
\end{align*}
%where the last inequality is based on  (\ref{052601}).%, which finishes the proof of (\ref{122603}).

Finally, the estimate  (\ref{122605}) follows from (\ref{122502}) and (\ref{122507}) by induction about $A$. Precisely,
the term in $R_2$ comes from $\frac1{j!}Z_1^{(j)}$ and $\frac1{ {j}!}R_1^{({j})}$ for some $j\in \mathbb{N}$, where
$Z_1^{(j)}=\left\{Z_1^{(j-1)},H\right\}$, $Z_1^{(0)}=Z_1$, $R_1^{(j)}=\left\{R_1^{(j-1)},H\right\}$ and  $R_1^{(0)}=R_1$. Following the proof of (\ref{011403}) and noting that
$\Delta(\bm l)\leq \Delta(\bm n)+\Delta(\bm m)$ and $|\bm l|_1\leq |\bm n|_1+|\bm m|_1-2,$ we conclude the proof of (\ref{122605}).
\end{proof}

\subsection{Iterative Lemma}
We introduce some constants during the iterative steps.  For $s\in\mathbb{N}$ and $1\leq s\leq M$, let
\begin{align}\label{011601}
	N_s=M^2-20(s-1),
\end{align}
and then using $M\gg 1$ one has 
\begin{equation*}
	N_{s}\geq M^2-20M\geq \frac{M^2}2,
\end{equation*}
which implies
\begin{equation*}
	\left[j_0-\frac{M^2}2,j_0+\frac{M^2}2\right]\subset A(j_0,N_s)\subset\left[j_0-{M^2},j_0+{M^2}\right].
\end{equation*}
Let 
\begin{equation*}
	\sigma=\frac{r}{2M},
\end{equation*}
then
\begin{equation*}
	r-s\sigma\geq r-M\cdot \frac{r}{2M}\geq r/2.
\end{equation*}
\begin{lem}\label{62}
Consider the Hamiltonian $H_s(q,\bar q)$ of the form
\begin{align*}
\nonumber H_s&=D+Z_s+R_s\\
&=\frac12\sum_{\bm j\in\mathbb{Z}^d}V_{\bm j}|q_{\bm j}|^2+\sum_{\bm n\in\mathbb{N}^{\mathbb{Z}^d}\times\mathbb{N}^{\mathbb{Z}^d}\atop \bm n\in\mathcal{N}}Z_s(\bm n)\prod_{{\rm supp}\ \bm n}\left|q_{\bm j}^{\bm n_{\bm j}}\right|^2
+\sum_{\bm n\in\mathbb{N}^{\mathbb{Z}^d}\times\mathbb{N}^{\mathbb{Z}^d}}R_s(\bm n)\prod_{{\rm supp}\ \bm n}q_{\bm j}^{\bm n_{\bm j}}\bar q_{\bm j}^{\bm n_{\bm j}'}.%\label{122101}.
\end{align*}
Let $V$ satisfy the $(\gamma,L,M,j_0)$-nonresonant conditions \eqref{122106}. Assume 
\begin{align}
&\label{122601..}\left\|Z_s\right\|_{j_0,M^2,r-(s-1)\sigma}\leq \epsilon^{0.9}\left(\sum_{i=0}^{s-1}2^{-i}\right),\\
&\label{122603..}\left\|R_s\right\|_{j_0,M^2,r-(s-1)\sigma}\leq \epsilon^{0.9}\left(\sum_{i=0}^{s-1}2^{-i}\right),\\
&\label{122602..}\left\|\mathcal{R}_s\right\|_{j_0,M^2,r-(s-1)\sigma}\leq \epsilon^{1+0.9(s-1)},
\end{align}
where
\begin{align}\label{122604.}
\mathcal{R}_s=\sum_{\bm n\in\mathbb{N}^{\mathbb{Z}^d}\times\mathbb{N}^{\mathbb{Z}^d}}R_{s}(\bm n)\prod_{{\rm supp}\ \bm n\cap A(j_0,N_{s+1})\neq\emptyset}q_{\bm j}^{\bm n_{\bm j}}\bar q_{\bm j}^{\bm n_{\bm j}'}.
\end{align}
Furthermore, assume for any $A\geq 3$ the following holds
\begin{align}\label{122605.}\left\|\sum_{\Delta(\bm n)+|\bm n|=A}\left(|Z_s(\bm n)|+|R_s(\bm n)|\right)\prod_{{\rm supp}\ \bm n}q_{\bm j}^{\bm n_{\bm j}}\bar q_{\bm j}^{\bm n_{\bm j}'}\right\|_{j_0,M^2,r-(s-1)\sigma}
\leq \epsilon^{1+0.9(A-3)}.
\end{align}
Then there exists a change of variables $\Phi_s:=X_{F_s}^1$
\begin{align*}
\nonumber H_{s+1}&=H_s\circ X_{F_s}^1\\
%\label{122103}
&=\frac12\sum_{\bm j\in\mathbb{Z}^d}V_{\bm j}|q_{\bm j}|^2\\
&\quad
+\sum_{\bm n\in\mathbb{N}^{\mathbb{Z}^d}\times\mathbb{N}^{\mathbb{Z}^d}\atop \bm n\in\mathcal{N}}Z_{s+1}(\bm n)\prod_{{\rm supp}\ \bm n}\left|q_{\bm j}^{\bm n_{\bm j}}\right|^2 +\sum_{\bm n\in\mathbb{N}^{\mathbb{Z}^d}\times\mathbb{N}^{\mathbb{Z}^d}}R_{s+1}(\bm n)\prod_{{\rm supp}\ \bm n}q_{\bm j}^{\bm n_{\bm j}}\bar q_{\bm j}^{\bm n_{\bm{j}}'}.
\end{align*}
Moreover, one has
\begin{align}
&\label{122610} \left\|F_s\right\|_{j_0,M^2,r-(s-1)\sigma}\leq \epsilon^{0.9s},\\
&\label{122616}\left\|Z_{s+1}\right\|_{j_0,M^2,r-s\sigma}\leq \epsilon^{0.9}\left(\sum_{i=0}^{s}2^{-i}\right),\\
&\label{122615}\left\|R_{s+1}\right\|_{j_0,M^2,r-s\sigma}\leq \epsilon^{0.9}\left(\sum_{i=0}^{s}2^{-i}\right),\\
&\label{122611}\left\|\mathcal{R}_{s+1}\right\|_{j_0,M^2,r-s\sigma}\leq\epsilon^{1+0.9s},
\end{align}
where
\begin{align*}
\mathcal{R}_{s+1}=\sum_{\bm n\in\mathbb{N}^{\mathbb{Z}^d}\times\mathbb{N}^{\mathbb{Z}^d}}R_{s+1}(\bm n)\prod_{{\rm supp}\ \bm n\cap A(j_0,N_{s+2})\neq \emptyset}q_{\bm j}^{\bm n_{\bm j}}\bar q_{\bm j}^{\bm n_{\bm j}'}.
\end{align*}
Moreover, we have
\begin{align}
\left\|\sum_{\Delta(\bm n)+|\bm n|=A}\left(|Z_{s+1}(\bm n)|+|R_{s+1}(\bm n)|\right)\prod_{{\rm supp}\ \bm n}q_{\bm j}^{\bm n_{\bm j}}\bar q_{\bm j}^{\bm n_{\bm j}'}\right\|_{j_0,M^2,r-s\sigma}
\label{122617}\leq\epsilon^{1+0.9(A-3)}.
\end{align}

\end{lem}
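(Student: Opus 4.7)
My plan is to mirror the one-step argument of Lemma \ref{122501} at the generic $s$-th stage, with the induction hypotheses \eqref{122601..}--\eqref{122605.} playing the role that \eqref{122502} played at $s=1$, and with the nonresonant condition \eqref{122106} providing the small divisor control needed to invert the Lie derivative $L_V$.

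First I would define $F_s$ as the unique (formally) zero-mean-preserving solution of the homological equation
\begin{equation*}
L_V F_s = \mathcal{R}_s,
\end{equation*}
so that coefficient-by-coefficient
\begin{equation*}
F_s(\bm n) = \frac{R_s(\bm n)}{\sum_{\bm j}(\bm n_{\bm j}-\bm n_{\bm j}')V_{\bm j}}
\end{equation*}
for $\bm n\notin\mathcal N$, with ${\rm supp}\,\bm n\cap A(j_0,N_{s+1})\neq\emptyset$. Because $|\bm n|_1$ is bounded by the degrees appearing in $\mathcal R_s$ (which, by the induction hypothesis \eqref{122605.}, grow only polynomially while their coefficients decay geometrically in $\epsilon$), the nonresonant bound \eqref{122106} costs a factor like $\epsilon^{-0.01}$ per coefficient. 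Combined with \eqref{122602..} this yields $\|F_s\|_{j_0,M^2,r-(s-1)\sigma}\leq \epsilon^{1+0.9(s-1)-0.01}\leq\epsilon^{0.9s}$, which is \eqref{122610}, and also verifies the smallness hypothesis \eqref{042801} of Proposition \ref{E1} with $\sigma=r/(2M)$ since $M\gg1$.

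Next I would expand, via Taylor's formula,
\begin{equation*}
H_{s+1} = H_s\circ X_{F_s}^1 = D + Z_s + R_s + \{D,F_s\} + \bigl(X_{F_s}^1-\mathbf{id}-\{\cdot,F_s\}\bigr)D + \bigl(X_{F_s}^1-\mathbf{id}\bigr)(Z_s+R_s).
\end{equation*}
The homological equation gives $\{D,F_s\}=-\mathcal R_s$ (up to the standard sign), so the dangerous piece in $R_s$ is exactly cancelled, leaving $R_s-\mathcal R_s$ plus the Taylor remainders. I then decompose the result into its resonant ($\bm n\in\mathcal N$) and non-resonant parts to read off $Z_{s+1}$ and $R_{s+1}$. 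The resonant increment $Z_{s+1}-Z_s$ comes only from the resonant projection of $\{Z_s+R_s,F_s\}$ and higher Taylor terms, each of which is bounded using Proposition \ref{090603} and \eqref{9} by $(e/\sigma)\|F_s\|\cdot(\|Z_s\|+\|R_s\|)\leq 2eM\epsilon^{0.9s}\cdot\epsilon^{0.9}\leq\epsilon^{0.9}\cdot 2^{-s}$; this gives \eqref{122616} and, analogously for the non-resonant part, \eqref{122615}. For $\mathcal R_{s+1}$ I would use the quadratic bound \eqref{10}, which produces the improved estimate
\begin{equation*}
\|\mathcal R_{s+1}\|_{j_0,M^2,r-s\sigma}\lesssim (e/\sigma)^2\|F_s\|^2\|H_s-D\|+(e/\sigma)\|F_s\|\cdot\|R_s-\mathcal R_s\|_{\text{far from }A(j_0,N_{s+2})},
\end{equation*}
both contributions being of order $\epsilon^{1+0.9s}$, giving \eqref{122611}.

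The homogeneous degree bound \eqref{122617} I would prove by induction on $A$, tracking that each Poisson bracket $\{W,U\}$ satisfies $\Delta(\bm l)\leq\Delta(\bm n)+\Delta(\bm m)$ and $|\bm l|_1=|\bm n|_1+|\bm m|_1-2$, so the total degree-plus-spread of the product is additive minus $2$; combined with \eqref{122605.} and $\|F_s\|\leq\epsilon^{0.9s}$, a term of total weight $A$ generated at this step picks up at least one extra factor of $\epsilon^{0.9}$ beyond its constituents, yielding $\epsilon^{1+0.9(A-3)}$. The main obstacle I anticipate is bookkeeping rather than ingenuity: one must verify that the supports produced by Poisson brackets of functions supported in $A(j_0,M^2)$ with the full Hamiltonian $H_s$ still sit inside $A(j_0,M^2)$ (so the $\|\cdot\|_{j_0,M^2,\cdot}$ norm is the right tool), and that the restriction to $A(j_0,N_{s+2})$ in the definition of $\mathcal R_{s+1}$ correctly captures only the terms that \emph{Lemma \ref{122501}'s} bracket mechanism was unable to kill, namely those arising through the quadratic remainder in Taylor's formula applied to the portion of $R_s-\mathcal R_s$ already away from $A(j_0,N_{s+1})$, whose brackets with $F_s$ can only reach as far as $A(j_0,N_{s+1}-\Delta(F_s))\supset A(j_0,N_{s+2})$ by the choice $N_{s+1}-N_{s+2}=20$.
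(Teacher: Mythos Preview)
Your overall architecture matches the paper's: solve a homological equation for $F_s$, Taylor–expand $H_s\circ X_{F_s}^1$, and control the pieces using Propositions \ref{090603} and \ref{E1}. However, there is one concrete gap. In the paper the right-hand side of the homological equation is \emph{not} the full $\mathcal R_s$ but its degree-truncated version
\[
\widetilde{\mathcal R}_s=\sum R_s(\bm n)\prod_{\substack{{\rm supp}\,\bm n\cap A(j_0,N_{s+1})\neq\emptyset\\ \Delta(\bm n)+|\bm n|\le s+2}} q_{\bm j}^{\bm n_{\bm j}}\bar q_{\bm j}^{\bm n_{\bm j}'},
\]
and the discarded tail $\mathcal R_s-\widetilde{\mathcal R}_s$ is harmless because \eqref{122605.} already gives it size $\le\epsilon^{1+0.9s}$, so it passes straight into $\mathcal R_{s+1}$. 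This truncation forces $\Delta(\bm n)\le s$ for every monomial of $F_s$; combined with ${\rm supp}\,\bm n\cap A(j_0,N_{s+1})\neq\emptyset$ one gets ${\rm supp}\,\bm n\subset A(j_0,N_{s+1}+s)\subset A(j_0,M^2)$. That containment is exactly the standing hypothesis on $W$ in Proposition \ref{090603} and hence in \eqref{9}--\eqref{10}; without it the Poisson-bracket and flow estimates you invoke are not available.

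Your own final paragraph actually leans on the missing truncation: you argue that brackets with $F_s$ ``can only reach as far as $A(j_0,N_{s+1}-\Delta(F_s))$'' and use $N_{s+1}-N_{s+2}=20$ to absorb this. But if $F_s$ is built from all of $\mathcal R_s$, its monomials can have arbitrarily large $\Delta(\bm n)$ (their coefficients are tiny by \eqref{122605.}, yet nonzero), so neither the support hypothesis of Proposition \ref{090603} nor any uniform bound on $\Delta(F_s)$ holds. Inserting the cutoff $\Delta(\bm n)+|\bm n|\le s+2$ at the homological-equation stage, and feeding the leftover directly into $\mathcal R_{s+1}$, repairs both issues; this is precisely what the paper does, and once you add it your outline goes through.
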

\begin{proof}
As before, $F_s$ satisfies the homological equation
\begin{align*}%\label{070307}
L_{V} F_s=\widetilde{\mathcal R}_s,
\end{align*}
where
\begin{align}\label{022501}
\widetilde{\mathcal{R}}_s(q,\bar q):=\sum_{\bm n\in\mathbb{N}^{\mathbb{Z}^d}\times\mathbb{N}^{\mathbb{Z}^d}}R_{s}(\bm n)\prod_{{\rm supp}\ \bm n\cap A(j_0,N_{s+1})\neq \emptyset\atop
\Delta (\bm n)+|\bm n|\leq s+2}q_{\bm j}^{\bm n_{\bm j}}\bar q_{\bm j}^{\bm n_{\bm j}'}.
\end{align}
By direct computations, one has
\begin{align*}%\label{112102}
F_s(\bm n)=\frac{R_s(\bm n)}{\sum_{\bm j\in\mathbb{Z}^d}(\bm n_{\bm j}-\bm n_{\bm j}')V_{\bm j}},
\end{align*}
unless $\bm n\in\mathcal{N}$.
%Since the frequency $v_s$ satisfies the $(\epsilon,\alpha,N)$-nonresonant conditions (\ref{122106}), we get
%\begin{align}\label{111901..}
%|F_s(n)|\leq {|R_{s}(n)|}\cdot\epsilon^{-\alpha}\cdot{N}\cdot \Delta^2(n)\cdot|n|^{\Delta(n)+1}.\end{align}
In view of (\ref{122602..}) and following the proof of (\ref{122507}), 
we have
\begin{align}
\nonumber\left\|F_s\right\|_{j_0,M^2,r-(s-1)\sigma}\leq \epsilon^{0.9s},
\end{align}
which finishes the proof of (\ref{122610}).

%Similarly, one has
%\begin{align}\label{042006}
%\left\|F_s\right\|_{j_0,N,r-(s-1)\sigma}^{\mathcal{L}}\leq \left(10N^3r^3\epsilon^{1-2\alpha}\left(4s\right)^{8s}\right)\left( \frac{(10s)^{10s}\cdot 2^6e}{\sigma}\cdot N^{3s}r^3\epsilon^{1-2\alpha}\right)^{s-1}.
%\end{align}
%In view of (\ref{042005}) and (\ref{042006}), one has
% \begin{align*}
 %\left|\left|\left|F_s\right|\right|\right|_{j_0,N,r-(s-1)\sigma}&\leq \left(20N^3r^3\epsilon^{1-2\alpha}\left(4s\right)^{8s}\right)\left( \frac{(10s)^{10s}\cdot 2^6e}{\sigma}\cdot N^{3s}r^3\epsilon^{1-2\alpha}\right)^{s-1}\\
 %&\leq\left(\frac{\sigma}{e}\right)\left( \frac{(10s)^{10s}\cdot 2^6e}{\sigma}\cdot N^{3s}r^3\epsilon^{1-2\alpha}\right)^{s},
 %\end{align*}
 %which finishes the proof of (\ref{122610}).

Using Taylor's formula again shows
\begin{align*}
H_{s+1}&:=H_s\circ X_{F_s}^1\\
&=D+\{D,F_s\}+Z_s+R_s+\left(X_{F_s}^1-\textbf{id}-\{\cdot,F_s\}\right)D
+\left(X_{F_s}^1-\textbf{id}\right)(Z_s+R_s)\\
&=D+Z_{s+1}+R_{s+1}\\
&=\frac12\sum_{\bm j\in\mathbb{Z}^d}V_{\bm j}|q_{\bm j}|^2+\sum_{\bm n\in\mathbb{N}^{\mathbb{Z}^d}\times\mathbb{N}^{\mathbb{Z}^d}\atop
\bm n\in\mathcal{N}}Z_{s+1}(\bm n)\prod_{{\rm supp}\ \bm n}\left|q_{\bm j}^{\bm n_{\bm j}}\right|^2 +\sum_{\bm n\in\mathbb{N}^{\mathbb{Z}^d}\times\mathbb{N}^{\mathbb{Z}^d}}R_{s+1}(\bm n)\prod_{{\rm supp}\ \bm n}q_{\bm j}^{\bm n_{\bm j}}\bar q_{\bm j}^{\bm n_{\bm j}'},%\label{122101},
\end{align*}
where
\begin{align}
%\label{031104}& D_{s+1}=D_s+\sum_{n\in\mathbb{N}^{\mathbb{Z}}\times\mathbb{N}^{\mathbb{Z}}\atop
%n\in\mathcal{N},|n|=2}G_{s+1}(n)\prod_{{\rm supp}\ n}q_j^{n_j}\bar q_j^{n_j'},\\
\nonumber &Z_{s+1}=Z_s+\sum_{\bm n\in\mathbb{N}^{\mathbb{Z}^d}\times\mathbb{N}^{\mathbb{Z}^d}\atop \bm n\in\mathcal{N}}R_{s}(\bm n)\prod_{{\rm supp}\ \bm n\cap A(j_0,N_{s+1})\neq \emptyset\atop
	\Delta (\bm n)+|\bm n|\leq s+2}q_{\bm j}^{\bm n_{\bm j}}\bar q_{\bm j}^{\bm n_{\bm j}'}.
%\nonumber &R_{s+1}=\sum_{\bm n\in\mathbb{N}^{\mathbb{Z}^d}\times\mathbb{N}^{\mathbb{Z}^d}\atop
%\bm n\notin\mathcal{N}}G_{s+1}(\bm n)\prod_{{\rm supp}\ \bm n}q_{\bm j}^{\bm n_{\bm j}}\bar q_{\bm j}^{\bm n_{\bm j}'}.
\end{align}
%Write
%\begin{align*}
%R_{s+1}=\mathcal{R}_{s+1}+(R_{s+1}-\mathcal{R}_{s+1}),
%\end{align*}
%where
%\begin{align*}
%{\mathcal{R}}_{s+1}=\sum_{\bm n\in\mathbb{N}^{\mathbb{Z}^d}\times\mathbb{N}^{\mathbb{Z}^d}}R_{s+1}(\bm n)\prod_{{\rm supp}\ \bm n\cap A(j_0,N_{s+2})\neq\emptyset}q_{\bm j}^{\bm n_{\bm j}}\bar q_{\bm j}^{\bm n_{\bm j}'}.
%\end{align*}
Following the proof of (\ref{122601..})-(\ref{122602..}), one completes the proof of (\ref{122616})-(\ref{122611}).

%By (\ref{9}) and (\ref{10}) in Remark \ref{092403} and (\ref{122610}), one has
%\begin{align*}
%\left|\left|\left|\mathcal{R}_{s+1}\right|\right|\right|_{j_0,N,r-s\sigma}&\leq\left(\frac{e}{\sigma}\right)\left|\left|\left|F_s\right|\right|\right|_{j_0,N,r-(s-1)\sigma}
%\left|\left|\left|Z_s+R_s\right|\right|\right|_{j_0,N,r-(s-1)\sigma}\\
 %%&\leq&\left(\frac{e}{\sigma}\right)\left\|F_s\right\|_{j_0,N,r-(s-1)\sigma}
%\left\|Z_s+R_s\right\|_{j_0,N,r-(s-1)\sigma}\\
%&\leq\left(\frac{(10s)^{10s}\cdot 2^6e}{\sigma}\cdot N^{3s}r^3\epsilon^{1-2\alpha}\right)^{s}\cdot10Nr^3\epsilon\left(\sum_{i=0}^{s-1}2^{-i}\right)\\
%&\leq10Nr^3\epsilon\cdot \left(\frac{(10(s+1))^{10(s+1)}\cdot 2^{6}e}{\sigma}\cdot N^{3(s+1)}r^3\epsilon^{1-2\alpha}\right)^{s},
%\end{align*}

% Similarly, we have
%\begin{equation*}
%\left|\left|\left|R_{s+1}-\mathcal{R}_{s+1}\right|\right|\right|_{j_0,N,r-s\sigma}\leq 10Nr^3\epsilon\left(\sum_{i=0}^{s}2^{-i}\right),
%\end{equation*}
%where the last inequality is based on  (\ref{011412}). This finishes the proof of (\ref{122615}).
%Similarly,  one has
%\begin{align*}
%\left|\left|\left|Z_{s+1}\right|\right|\right|_{j_0,N,r-s\sigma}\leq 10Nr^3\epsilon\left(\sum_{i=0}^{s}2^{-i}\right),
%\end{align*}
%which finishes the proof of (\ref{122616}).

Finally, the estimate (\ref{122617}) follows from the proof of (\ref{122605}).

%In view of (\ref{122605.}) and (\ref{022501}), we have
%\begin{eqnarray*}
%\left|\left|\left|\mathcal{R}_s-\widetilde{\mathcal{R}}_s\right|\right|\right|_{j_0,N,r-(s-1)\sigma}&\leq& 2Nr^3\epsilon\left( \frac{(2s)^s\cdot 2^3e}{\sigma}\cdot 2N^sr^3\epsilon\right)^{s}\\&\leq &Nr^3\epsilon\cdot \left(\frac{(2(s+1))^{s+1}\cdot 2^3e}{\gamma\sigma}\cdot 2N^{s+1}r^3\epsilon\right)^{s},
%\end{eqnarray*}
%which satisfies the estimate (\ref{122611}).

\end{proof}
\subsection{Birkhoff Normal Form Theorem}
\begin{thm}[\textbf{Birkhoff Normal Form}]\label{011501} Consider the Hamiltonian \eqref{011410} and assume that the potential $V$ satisfies the $(\gamma,L,M,j_0)$-nonresonant condition \eqref{122106}. Given any $r>2$, there exists an $\varepsilon(\gamma,L,M,j_0)>0$ such that, for any $0<\epsilon<\varepsilon(\gamma,L,M,j_0)$,
there exists a symplectic transformation $\Gamma=\Gamma_1\circ\cdots\circ\Gamma_M$
such that
\begin{align*}
\widetilde H&=H_1\circ \Gamma={D}+\widetilde{Z}+\widetilde{R}\\
&=\frac12\sum_{\bm j\in\mathbb{Z}^d}V_{\bm j}|q_{\bm j}|^2+\sum_{\bm n\in\mathbb{N}^{\mathbb{Z}^d}\times\mathbb{N}^{\mathbb{Z}^d}\atop \bm n\in\mathcal{N}}\widetilde Z(\bm n)\prod_{{\rm supp}\ \bm n}q_{\bm j}^{\bm n_{\bm j}}\bar q_{\bm j}^{\bm n_{\bm j}'}\\
&\ +\sum_{\bm n\in\mathbb{N}^{\mathbb{Z}^d}\times\mathbb{N}^{\mathbb{Z}^d}}\widetilde R(\bm n)\prod_{{\rm supp}\ \bm n}q_{\bm j}^{\bm n_{\bm j}}\bar q_{\bm j}^{\bm n_{\bm j}'},
\end{align*}
where
\begin{align}
&\label{030401} \left\|\widetilde {Z}\right\|_{j_0,M^2,r/2}\leq 2\epsilon^{0.9},\\
&\label{030402} \left\|\widetilde R\right\|_{j_0,M^2,r/2}\leq 2\epsilon^{0.9},
\end{align}
and
\begin{align}
\label{030403}&\left\|\widetilde{\mathcal{R}}\right\|_{j_0,M^2,r/2}\leq\epsilon^{0.9M},
\end{align}
with
\begin{align}
\label{042301}&\widetilde{\mathcal{R}}=\sum_{\bm n\in\mathbb{N}^{\mathbb{Z}^d}\times\mathbb{N}^{\mathbb{Z}^d}}\widetilde{R}(\bm n)\prod_{{\rm supp}\ \bm n\cap A(j_0,M^2/2)\neq \emptyset}q_{\bm j}^{\bm n_{\bm j}}\bar q_{\bm j}^{\bm n_{\bm j}'}.
\end{align}
Furthermore, for any $A\geq 3$ the following estimate holds
\begin{align}
\left\|\sum_{\Delta(\bm n)+|\bm n|=A}\left(|\widetilde{Z}(\bm n)|+|\widetilde{R}(\bm n)|\right)\prod_{{\rm supp}\ \bm n}q_{\bm j}^{\bm n_{\bm j}}\bar q_{\bm j}^{\bm n_{\bm j}'}\right\|_{j_0,M^2,r/2} 
\label{042302}\leq\epsilon^{1+0.9(A-3)}.
\end{align}
\end{thm}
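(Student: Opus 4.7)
The plan is to obtain Theorem \ref{011501} by iterating the one-step normal form from Lemma \ref{122501} (for the base step $s=1$) together with the iterative Lemma \ref{62} (for the subsequent steps $s=2,\ldots,M$), and then composing the resulting $M$ symplectic flows. First, I would observe that the conclusions (\ref{122601})--(\ref{122605}) of Lemma \ref{122501} are, term for term, the hypotheses (\ref{122601..})--(\ref{122605.}) of Lemma \ref{62} specialized to $s=2$: the geometric sum $\sum_{i=0}^{s-1}2^{-i}$, the exponent $1+0.9(s-1)$, and the radius $r-(s-1)\sigma$ all match at $s=2$. Thus the output of Lemma \ref{122501} launches the iteration.

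Next I would proceed by induction on $s$: if $H_s=D+Z_s+R_s$ satisfies (\ref{122601..})--(\ref{122605.}), then Lemma \ref{62} furnishes $\Phi_s = X_{F_s}^1$ such that $H_{s+1}=H_s\circ \Phi_s$ obeys (\ref{122616})--(\ref{122617}), which are exactly the hypotheses at the next stage. At every stage I must verify that the smallness condition (\ref{042801}) of Proposition \ref{E1} holds; since $\|F_s\|_{j_0,M^2,r-(s-1)\sigma}\leq \epsilon^{0.9s}$ by (\ref{122610}) and $\sigma=r/(2M)$, this reduces to $(2eM/r)\epsilon^{0.9s}\leq 1/2$, which is absorbed by choosing $\varepsilon(\gamma,L,M,j_0)$ sufficiently small (the dependence on $M$ and $r$ is benign since $M$ is fixed). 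After $M$ total applications, I define
\begin{equation*}
\Gamma=\Gamma_1\circ \Phi_2\circ\cdots\circ \Phi_M,\qquad \widetilde Z:=Z_{M+1},\quad \widetilde R:=R_{M+1},
\end{equation*}
where $\Gamma_1=X_{F_1}^1$ is from Lemma \ref{122501}. Since $r-M\sigma=r/2$, the estimates (\ref{030401})--(\ref{030402}) follow from (\ref{122616})--(\ref{122615}) at $s=M$, using $\sum_{i=0}^M 2^{-i}<2$, and (\ref{042302}) is just (\ref{122617}) at $s=M$.

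The only step that requires an extra line is the bound (\ref{030403}) on $\widetilde{\mathcal R}$ defined by (\ref{042301}): the radius $M^2/2$ appearing there does \emph{not} literally equal the radius $N_{M+2}=M^2-20(M+1)$ used in the iterative $\mathcal R_{M+1}$. However, for $M\gg1$ one has $N_{M+2}\geq M^2/2$, so $A(j_0,M^2/2)\subset A(j_0,N_{M+2})$, and therefore $\widetilde{\mathcal R}$ is (termwise in absolute value) a subseries of $\mathcal R_{M+1}$. Consequently
\begin{equation*}
\|\widetilde{\mathcal R}\|_{j_0,M^2,r/2}\leq \|\mathcal R_{M+1}\|_{j_0,M^2,r/2}\leq \epsilon^{1+0.9M}\leq \epsilon^{0.9M},
\end{equation*}
as claimed.

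The substantive analytic content---the Poisson-bracket estimate of Proposition \ref{090603}, the solvability of the homological equation, and the use of the $(\gamma,L,M,j_0)$-nonresonant condition (\ref{122106}) to bound $F_s$---has already been packaged into Lemmas \ref{122501} and \ref{62}, so what remains is purely bookkeeping. The main (and mild) obstacle is tracking how the three families of parameters---the seminorm radii $r-s\sigma$, the barrier radii $N_s$, and the geometric sums---evolve consistently across all $M$ iterations, and checking that every smallness threshold (including those needed to apply Proposition \ref{E1} at each stage) is uniformly absorbable into the single constant $\varepsilon(\gamma,L,M,j_0)$.
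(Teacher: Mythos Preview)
Your proposal is correct and follows essentially the same approach as the paper: verify that the initial Hamiltonian satisfies the hypotheses of the Iterative Lemma, then apply it $M$ times and read off the final estimates using $r-M\sigma=r/2$ and $\sum_{i=0}^M 2^{-i}<2$. Your write-up is in fact more careful than the paper's two-line proof, in particular your observation that $A(j_0,M^2/2)\subset A(j_0,N_{M+2})$ (since $N_{M+2}\geq M^2/2$ for $M\gg1$) is needed to pass from the iterative bound on $\mathcal R_{M+1}$ to the stated bound \eqref{030403} on $\widetilde{\mathcal R}$, a point the paper leaves implicit.
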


\begin{proof}
First of all, note that the Hamiltonian (\ref{011410}) satisfies all assumptions (\ref{122601..})--(\ref{122605.}) for $s=1$, which follows from (\ref{122502}).% and (\ref{011411}).

Finally, using the Iterative Lemma, one can find a symplectic transformation $\Gamma=\Gamma_1\circ\cdots\circ\Gamma_M$ such that
\begin{align*}
\widetilde H:=H_{M+1}=H_1\circ \Gamma, 
 \end{align*}which satisfies (\ref{030401})-(\ref{042302}).
\end{proof}

\subsection{Proof of the main theorem}

Now we are in a position to complete the proof of Theorem \ref{main}.

\begin{proof}
	In view of Theorem \ref{011501}, one obtains
	the  $\widetilde H(\tilde q,\bar{\tilde q})$ in the new coordinates. The new Hamiltonian equation is given by
	\begin{align}\label{011503}
		{\rm {i}} \dot{\tilde q}=2\frac{\partial\widetilde{H}}{\partial \bar{\tilde q}}.
	\end{align}
	We get by using (\ref{011503}) that
	\begin{align*}
		\frac{d}{dt}\sum_{|\bm j|>j_0}\left|\tilde q_{\bm j}(t)\right|^2=&\left\{\sum_{|\bm j|>j_0}\left|\tilde q_{\bm j}(t)\right|^2,\widetilde D+\widetilde Z+\widetilde R\right\}\\
		=&\left\{\sum_{|\bm j|>j_0}\left|\tilde q_{\bm j}(t)\right|^2,\widetilde R\right\}\\=&4\mbox{Im}  \sum_{|\bm j|>j_0}\bar{\tilde q}_{\bm j}(t)\frac{\partial\widetilde{R}}{\partial \bar{\tilde q}}\\
		=&\sum_{\bm n\in\mathbb{N}^{\mathbb{Z}^d}\times\mathbb{N}^{\mathbb{Z}^d}}
		\widetilde{R}(\bm n)\sum_{|\bm j|>j_0}(\bm n_{\bm j}-\bm n_{\bm j}')\prod_{{\rm supp}\ \bm n}{\tilde q_{\bm j}}^{\bm n_{\bm j}}\bar {\tilde q}_{\bm j}^{\bm n_{\bm j}'}.
	\end{align*}
	In view of (\ref{042301}), we decompose $\widetilde R$ into  three parts:
	\begin{align*}
		\widetilde R=\widetilde {{R}}^{(1)}+ \widetilde {{R}}^{(2)}+\widetilde {{R}}^{(3)},
	\end{align*}
	where
	\begin{align}
		\nonumber&\widetilde {{R}}^{(1)}=\widetilde{\mathcal{R}},\\
		\nonumber&\widetilde {{R}}^{(2)}=\sum_{\bm n\in\mathbb{N}^{\mathbb{Z}^d}\times\mathbb{N}^{\mathbb{Z}^d}}
		\widetilde{R}(\bm n)\sum_{|\bm j|>j_0}(\bm n_{\bm j}-\bm n_{\bm j}')\prod_{{\rm supp}\ \bm n\cap A(j_0,M^2/2)=\emptyset\atop \Delta(\bm n)\geq M+4}{\tilde q_{\bm j}}^{\bm n_{\bm j}}\bar {\tilde q}_{\bm j}^{\bm n_{\bm j}'},\\
		\label{042304}&\widetilde {{R}}^{(3)}=\sum_{\bm n\in\mathbb{N}^{\mathbb{Z}^d}\times\mathbb{N}^{\mathbb{Z}^d}}
		\widetilde{R}(\bm n)\sum_{|\bm j|>j_0}(\bm n_{\bm j}-\bm n_{\bm j}')\prod_{{\rm supp}\ n\cap A(j_0,M^2/2)=\emptyset\atop \Delta(\bm n)\leq M+3}{\tilde q_{\bm j}}^{\bm n_{\bm j}}\bar {\tilde q}_{\bm j}^{\bm n_{\bm j}'}.
	\end{align}
	Using (\ref{030403}) and (\ref{042302}) implies
	\begin{align}\label{042303}
		\left\|\widetilde{{R}}^{(1)}+\widetilde{{R}}^{(2)}\right\|_{j_0,M^2,r/2}\leq\epsilon^{M+1}.
	\end{align}
	%Take
	%\begin{align*}
	%M=\left[\sqrt{N}\right]-1\approx\left|\frac{\ln \epsilon}{200\ln\ln\epsilon}\right|.
	%\end{align*}
	%Then one has
	%\begin{align}\label{042303}
	%\left|\left|\left|\widetilde{{R}}^{(1)}+\widetilde{{R}}^{(2)}\right|\right|\right|_{j_0,N,r/2}\leq \epsilon\cdot %\exp\left(-\frac{\left|\ln \epsilon\right|^2}{200\left|\ln\ln\epsilon\right|}\right),
	%\end{align}
	%where we use $0<\alpha<\frac1{100}$ and $\epsilon\ll1$.
	
	Now consider the monomials in $\widetilde{{R}}^{(3)}$. Recalling that
	\begin{align*}
		\Delta(\bm n)\leq M+3,
	\end{align*}
	if ${\rm supp}\ \bm n\cap A(j_0,M^2/2)= \emptyset$, then
	for any $\bm j\in {\rm supp}\ \bm n$ satisfying %\left((-\infty,-j_0)\cup(j_0,+\infty)\right)\neq \emptyset$, then
	\begin{align*}
	|\bm j|>j_0.
	\end{align*}
	Hence the terms in (\ref{042304}) satisfy
	\begin{align}\label{042305}
		\sum_{|\bm j|>j_0}(\bm n_{\bm j}-\bm n_{\bm j}')=0.
	\end{align}
	Using (\ref{042303}) and (\ref{042305}), one has
	\begin{align*}
		\frac{d}{dt}\sum_{|j|>j_0}\left|\tilde q_j(t)\right|^2\leq \epsilon^{M+1}.
	\end{align*}

	Integrating in $t$, we obtain
	\begin{equation}\label{051801}
		\sum_{|\bm j|>j_0}\left|\tilde q_{\bm j}(t)\right|^2\leq \sum_{|\bm j|>j_0}\left|\tilde q_{\bm j}(0)\right|^2+\epsilon^{M+1}t.
	\end{equation}
	
	Note that the symplectic transformation only acts on the $M^2$-neighborhood of $\left\{\bm j\in\mathbb{Z}^d:\ |\bm j|=j_0\right\}$.  We obtain
	\begin{align*}
		\sum_{|\bm j|>j_0+M^2}|q_{\bm j}(t)|^2\leq\sum_{|\bm j|>j_0}|\tilde q_{\bm j}(t)|^2,
	\end{align*}
	which together with (\ref{051801}) gives
	\begin{align*}
		\sum_{|\bm j|>j_0+M^2}|q_{\bm j}(t)|^2\leq \sum_{|\bm j|>j_0}\left|\tilde q_{\bm j}(0)\right|^2+\epsilon^{M+1}t.
	\end{align*}
	
	On the other hand, the Hamiltonian preserves the $\ell^2$-norm. So we have
	\begin{align*}
		\sum_{|\bm j|>j_0}|\tilde q_{\bm j}(0)|^2=\sum_{\bm j\in\mathbb{Z}^d}|q_{\bm j}(0)|^2-\sum_{|\bm j|\leq j_0}|\tilde q_{\bm j}(0)|^2<\sum_{|\bm j|>j_0-M^2}|q_{\bm j}(0)|^2.
	\end{align*}
Hence one has
\begin{align*}
	\sum_{|\bm j|>j_0+M^2}|q_{\bm j}(0)|^2\leq 2\delta,
\end{align*}
for
	\begin{align*}\label{051803}
		|t|\leq \delta\cdot \epsilon^{-M}.
	\end{align*}

\end{proof}
\section{Estimates on the measure}
Let
\begin{equation*}
	\omega=\left(\omega_{\bm j}\right)_{\bm j\in\mathbb{Z}^d}
\end{equation*}
with 
\begin{equation*}
	\omega_{\bm j}=	V_{\bm j}(\bm\theta,\bm\alpha),
\end{equation*}
which is given by (\ref{080401}).
%\begin{align*}
%	\bm \omega_{{\rm NLS}}=(\omega_1,\cdots,\omega_D),\ \omega_\ell=P({\bm\theta+\bm n_{\ell}\bm\alpha})\ (1\leq \ell\leq D).
%\end{align*}
Define the resonant set $\mathfrak{R}( \bm k)$ with $0\neq \bm k\in\mathbb{Z}^{\mathbb{Z}^d}$ by
\begin{align*}
	\mathfrak{R}\left(\bm k\right)=\left\{\left(\bm \theta,\bm \alpha\right):\ \left|\sum_{\bm j\in\mathbb{Z}^d}k_{\bm j}\omega_{\bm j}\right|
	<\left(\frac{\gamma}{\left(2LM^2j_0\right)^{2(d+1)}}\right)^{10L^4M^4}\right\},
\end{align*}
and 
\begin{align*}
	\mathfrak{R}(\gamma,L,M,j_0)=\bigcup_{\bm k}^{*}\mathfrak{R}(\bm k),
\end{align*}
where the union $$\bigcup_{\bm k}^{*}$$
is taken for $\bm k$ satisfying 
\begin{equation}\label{080402}
	{\rm supp}\ \bm k\cap A(j_0,M^2)\neq \emptyset
	\end{equation}and
\begin{equation}\label{080403}\Delta (\bm k)+|\bm k|\leq M+2.
	\end{equation}

Denote by ${\rm meas(\cdot)}$ the Lebesgue measure. Then one has 
\begin{lem}[]\label{nlsthm}
	%Let  $0<\delta\ll1$.  We have for all $0<\gamma\leq \delta^{10dR^2} $ and $\tau\geq 4R^2$  ($R=D\cdot(\#\Gamma_K)=D\frac{(2K+1)^d-1}{2}$), 
	\begin{align}\label{090402}
		{\rm meas}(	\mathfrak{R}(\gamma,L,M,j_0))\leq \gamma.
	\end{align}
	%where 
	%\begin{align*} 
	%	\nonumber&\ \ \ {\rm DC}_{\rm nls}(\gamma;\tau)\\
	%	&=\left\{(\bm \alpha, \bm\theta)\in[0,1]^{2d}:\ |\bm k\cdot\bm\omega|\geq \frac{\gamma}{|\bm k |^{\tau}}\ {\rm for}\ \forall\ \bm k\in \Z^D\setminus\{0\}\right\}
	%\end{align*}
	%with  $\bm \omega=\bm\omega_{\rm NLS}$.
\end{lem}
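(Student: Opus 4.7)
The plan is a union bound: count the admissible $\bm k$, bound $\mathrm{meas}(\mathfrak{R}(\bm k))$ for each via a sublevel-set estimate for the trigonometric polynomial $f_{\bm k}=\sum_{\bm j}k_{\bm j}V_{\bm j}$, and combine.

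The cardinality of admissible $\bm k$ is controlled by routine combinatorics. The constraints (\ref{080402})--(\ref{080403}) force $\mathrm{supp}\,\bm k$ to lie in a cube of side at most $M+2$ meeting the annulus $A(j_0,M^2)$. Since the annulus contains at most $C_d j_0^{d-1}M^2$ lattice sites, the number of admissible supports is at most $C_d j_0^{d-1}M^2(M+2)^{d(M+2)}$; for each support, the integer vectors $(k_{\bm j})$ with $\sum|k_{\bm j}|\leq M+2$ number at most $(2M+5)^{M+2}$. Altogether $\#\{\bm k\}\leq(2LM^2 j_0)^{C_dM}$.

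For each admissible $\bm k$, I would expand
\begin{equation*}
f_{\bm k}(\bm\theta,\bm\alpha)=\sum_{\bm j\in\mathrm{supp}\,\bm k}\sum_{\bm\ell\in\Gamma_L}k_{\bm j}v_{\bm\ell}\cos 2\pi\bigl(\bm\ell\cdot\bm\theta+(\bm\ell\bm j)\cdot\bm\alpha\bigr),
\end{equation*}
a trigonometric polynomial on $\T^{2d}$ whose Fourier frequencies lie in $\{(\pm\bm\ell,\pm\bm\ell\bm j)\}$. Conditions (a), (b) on $\Gamma_L$ (namely $\ell_k\neq 0$ and $\bm\ell+\bm\ell'\neq 0$) make the map $(\bm\ell,\bm j)\mapsto(\bm\ell,\bm\ell\bm j)$ injective and exclude the doubling $\bm\ell=-\bm\ell'$ within $\Gamma_L$, so each summand contributes to a distinct Fourier mode without cancellation. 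Hence $f_{\bm k}$ has a Fourier coefficient of modulus $\geq\tfrac12\min\{|v_{\bm\ell}|:v_{\bm\ell}\neq 0\}>0$, and its Fourier support lies in a box of size $L$ in $\bm\theta$ and $\leq L(j_0+M^2)$ in $\bm\alpha$. I would then invoke the generalized-Wronskian sublevel-set estimate of \cite{SW23-2}, which for a trigonometric polynomial with this shift structure produces an exponent $D_{\bm k}$ (polynomial in $L,M,j_0$) so that $\mathrm{meas}\{|f_{\bm k}|<\kappa\}\leq C_d\,\kappa^{1/D_{\bm k}}$; the exponent $10L^4M^4$ appearing in (\ref{122106}) is calibrated precisely to dominate $D_{\bm k}$ uniformly. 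Plugging in the threshold $\kappa=(\gamma/(2LM^2j_0)^{2(d+1)})^{10L^4M^4}$ from the definition of $\mathfrak{R}(\bm k)$ yields $\mathrm{meas}(\mathfrak{R}(\bm k))\leq C_d\,\gamma/(2LM^2j_0)^{2(d+1)}$, and the union bound against the cardinality count closes the proof.

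The main obstacle is the Wronskian sublevel-set estimate with the correct exponent $D_{\bm k}$. Here the shift structure $V_{\bm j}(\bm\theta,\bm\alpha)=V(\bm\theta+\bm j\bm\alpha)$ together with conditions (a), (b) is essential: it allows the Wronskian determinant of a well-chosen collection of partial derivatives of $V$ at the sites $\bm j\in\mathrm{supp}\,\bm k$ to be shown non-degenerate with a quantitative lower bound, which is exactly the content of \cite{SW23-2} that is being imported and which avoids the dimensional blow-up that a naive Cartan/Remes estimate would produce. The remaining steps — the Fourier expansion, the cardinality count, and the union bound — are routine.
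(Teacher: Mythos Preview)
Your proposal is correct and follows the same architecture as the paper: a union bound over the $(j_0M^2)^{O(dM)}$ admissible $\bm k$, together with a Wronskian-based sublevel-set estimate from \cite{SW23-2} for each $f_{\bm k}$, the nondegeneracy being guaranteed by the shift structure and conditions (a), (b) on $\Gamma_L$.

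The implementation differs in one respect worth noting. You package the nondegeneracy as ``distinct Fourier modes, hence a Fourier coefficient of size $\geq\tfrac12\min|v_{\bm\ell}|$,'' and then cite \cite{SW23-2} as a black box. The paper instead unpacks the Wronskian explicitly: it picks a Diophantine direction $\bm\xi\in(0,1]^{2d}$, forms the matrix $W=[d_{\bm\xi}^{2s}\cos 2\pi\bm\ell\cdot(\bm\theta+\bm j\bm\alpha)]$, and factors $|\det W|=A_1\cdot A_2\cdot A_3$ where $A_1=\prod_{\bm j,\bm\ell}|\cos 2\pi\bm\ell\cdot(\bm\theta+\bm j\bm\alpha)|$ and $A_2,A_3$ are Vandermonde-type products in the eigenvalues $((\bm\ell\bm j)\cdot\widetilde{\bm\xi}+\bm\ell\cdot\widehat{\bm\xi})^2$. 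Conditions (a), (b) plus the Diophantine choice of $\bm\xi$ give the lower bound on $A_2A_3$; the factor $A_1$ forces the paper to first excise a small set $\bigcup\Pi_{\bm j,\bm\ell}$ where some cosine is small, and only then apply Lemma~\ref{BGG85} and Lemma~\ref{SW23-2}. Your Fourier-coefficient viewpoint is morally the same nondegeneracy statement and arguably cleaner, but be aware that the version of \cite{SW23-2} quoted in the Appendix (Lemma~\ref{SW23-2}) takes as input a pointwise lower bound on some $|d_{\bm\beta}^l f|$, not a Fourier-coefficient bound --- so the removal of the bad-cosine set and the BGG85 step are what convert one into the other, and you should expect to carry out (or at least acknowledge) that conversion rather than treat it as internal to the citation.
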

\smallskip
The proof of Lemma~\ref{nlsthm} is similar to that in \cite{SW23-2}. For the reader's convenience,
we reproduce the main steps below. For completeness, we also include the main lemmas used in 
\cite{SW23-2} in the Appendix. 
\smallskip
\begin{proof}
Note that the number of $\bm k$ satisfying (\ref{080402}) and (\ref{080403}) is less than $$\left(j_0M^2\right)^{d}\cdot(2M+1)^{d(M+2)}\leq \left(j_0M^2\right)^{2dM}.$$ 
Therefore it suffices to estimate $	{\rm meas}(	\mathfrak{R}(\bm k)).$

Recalling (\ref{080401}), one has 
\begin{align*}
\sum_{\bm j\in\mathbb{Z}^d}k_{\bm j}\omega_{\bm j}= (\bm k\otimes \bm v)\cdot \bm V,
	%&=(\bm k\otimes \bm p, \bm k\otimes \bm q)\cdot (\bm V, \bm W),
\end{align*}
where 
\begin{align*}
	&\bm k\otimes \bm v=(k_{\bm j} v_{\bm{\bm \ell}})_{\bm j\in{\rm{supp}}\ {\bm k}, \   \bm\ell\in \Gamma_L},\\
	%\bm k\otimes \bm q=(k_iq_\ell)_{1\leq i\leq D, 1\leq \ell \leq K}\in\R^{DK},\\
	&\bm V=(\cos2\pi\bm \ell\cdot(\bm\theta+\bm j\bm\alpha))_{\bm j\in{\rm{supp}}\ {\bm k}, \   \bm\ell\in \Gamma_L}.
	%\bm W= (\sin  t_\ell(\theta+\bm n_i\cdot\bm \alpha))_{1\leq i\leq D, 1\leq \ell \leq K}\in\R^{DK}.
\end{align*}
Choose some suitable vector $\bm \xi=(\widetilde{\bm \xi}, \widehat{\bm \xi})\in (0,1]^d\times (0,1]^d$, to be given shortly, and define 
%satisfying 
%\begin{align}
%\|\bm n\cdot \bm \beta\|_{\T}=\inf_{k\in\Z}|\bm n\cdot \bm \beta-k|\geq |\bm n|^{-2d}\ {\rm for} \  {\rm all}\ \bm n\in \Z^d\setminus\{0\}
%\end{align}
\begin{align*}
	d _{\bm \xi} =\sum_{i=1}^d\left(\widetilde{ \xi}_i\frac{\partial}{\partial \alpha_i}+\widehat{\xi}_i\frac{\partial}{\partial \theta_i}\right).
\end{align*}
Hence for $s\geq 1$, one obtains 
\begin{align*}
	&\ \ \ d _{\bm \xi}^{2s} \cos2\pi\bm  \ell\cdot (\bm\theta+\bm j\bm \alpha)\\
	&=(-1)^s(2\pi)^{2s}((\bm\ell\bm j)\cdot \widetilde{\bm \xi}+ \bm\ell\cdot \widehat{\bm \xi})^{2s} \cos2\pi\bm \ell\cdot (\bm\theta+\bm j\bm \alpha),
	%\nabla _{\bm \xi}^{2j} \sin  t_\ell (\theta+\bm n_i\cdot\bm \alpha)=(- t_\ell^2(\bm n_i\cdot \bm\beta+1)^2)^{j} \sin  t_\ell (\theta+\bm n_i\cdot\bm \alpha).
\end{align*}
where 
\begin{equation*}
 d _{\bm \xi}^{s+1} \cos2\pi\bm  \ell\cdot (\bm\theta+\bm j\bm \alpha)=d _{\bm \xi}\left( d _{\bm \xi}^{s} \cos2\pi\bm  \ell\cdot (\bm\theta+\bm j\bm \alpha)\right).
\end{equation*}
This  motivates us to consider the Wronskian 
\begin{align*}
	W=\left[ d _{\bm \xi}^{2s}  {V}_{(\bm j,\bm \ell)} \right]_{(\bm j, \bm \ell)}^{1\leq s\leq R}\ {\rm with}\  R=  (\#{\rm {supp} }\ \bm k)\cdot (\#\Gamma_L),
\end{align*}
which is  a $R\times R$ real matrix.  Direct computations show 
\begin{align*}
 |\det W|
	\label{wrons}=
	A_1\cdot A_2\cdot A_3,
\end{align*}
where
\begin{eqnarray*}
		A_1&=&\prod_{\bm j,\bm\ell}| \cos2\pi {\bm\ell}\cdot(\bm\theta+\bm j \bm \alpha)|,\\
		A_2&=&\prod_{\bm j, \bm\ell}(2\pi)^2((\bm\ell\bm j)\cdot \widetilde{\bm \xi}+ \bm\ell\cdot \widehat{\bm \xi})^2,
\end{eqnarray*}
and
\begin{align*}
A_3&= \prod_{ (\bm j,\bm\ell)\neq (\bm j',\bm\ell')}(2\pi)^2\left((\bm\ell\bm j)\cdot \widetilde{\bm \xi}+ \bm\ell\cdot \widehat{\bm \xi})^2-((\bm\ell'\bm j')\cdot \widetilde{\bm \xi}+ \bm\ell'\cdot \widehat{\bm \xi})^2\right|.
	%&\ \ \times \prod_{ (i,\ell)\neq (i',\ell')}|t_\ell^2(\bm n_i\cdot \bm\beta+1)^2-t_{\ell'}^2(\bm n_{i'}\cdot \bm\beta+1)^2|\\
	% &\ \ \times \prod_{ (i,\ell), (i',\ell')}|r_\ell^2(\bm n_i\cdot \bm\beta+1)^2-t_{\ell'}^2(\bm n_{i'}\cdot \bm\beta+1)^2|.
\end{align*}

If we choose a  Diophantine vector $\bm \xi$, then we have
\begin{equation}\label{090403}
	A_2\cdot A_3 \geq \left(\frac{\gamma}{\left(2LM^2j_0\right)^{2(d+1)}}\right)^{10L^3M^3},
\end{equation}
by using that $\Gamma_L\subset \mathbb{Z}^d$ satisfies Properties (a) and (b) (see (\ref{080401}) for the details).

We now estimate the lower bound of $A_1$.  Firstly note that there are at least $R$ terms in the product $\prod_{\bm j,\bm\ell}$ and $R\leq M(2L+1)$.  Hence it suffices to estimate 
$$| \cos2\pi {\bm\ell}\cdot(\bm\theta+\bm j \bm \alpha)|$$
for fixed $\bm j\in{\rm{supp}}\ {k}, \   \bm\ell\in \Gamma_L$. Let $\left\|x\right\|_{{\mathbb{T}}/2}={\rm dist(x,\mathbb{Z}/2)}.$
Using 
\begin{equation*}
	4\left\|x-\frac14\right\|_{{\mathbb{T}}/2}\leq \left|\cos 2\pi x\right|\leq 	2\pi\left\|x-\frac14\right\|_{{\mathbb{T}}/2},
\end{equation*}there exists a subset $\Pi_{\bm j,\bm l}$ (the union of intervals) satisfying 
	\begin{align*}\label{090401}
		{\rm meas}(\Pi_{\bm j,\bm l})\leq \frac{\gamma}{\left(2LM^2j_0\right)^{2(d+1)}},
	\end{align*}
such that for any $(\bm\theta,\bm\alpha)\in [0,1]^{2d}\setminus \Pi_{\bm j,\bm l},$ one has 
\begin{equation*}\label{090404}
	| \cos2\pi {\bm\ell}\cdot(\bm\theta+\bm j \bm \alpha)|\geq \frac{\gamma}{\left(2LM^2j_0\right)^{2(d+1)}}.
\end{equation*}

%\begin{equation*}
%	\Pi=\bigcup_{\bm j\in{\rm{supp}}\ {k}, \   \bm\ell\in \Gamma_L}\Pi_{\bm j,\bm l},
%\end{equation*}
%and then 
%	\begin{align*}
%	{\rm meas}\ \Pi\leq {\gamma}.
%\end{align*}
Hence for each  $(\bm\theta,\bm\alpha)\in [0,1]^{2d}\setminus	\bigcup_{\bm j,\bm l} \Pi_{\bm j,\bm l},$ one has 
%	\begin{align*}
%	A_1\geq\left(\frac{\gamma}{\left(2LM^2j_0\right)^{2(d+1)}}\right)^{10LM},
%	\end{align*}
%Therefore one has
\begin{equation}\label{080701}
	\left|	\det W\right|\geq \left(\frac{\gamma}{\left(2LM^2j_0\right)^{2(d+1)}}\right)^{20L^3M^3}.
\end{equation}
By (\ref{080701}) and using Lemmas \ref{BGG85} and \ref{SW23-2} in the Appendix, yields
\begin{equation*}
	\mbox{\rm meas} (\mathfrak{R}(\bm k))\leq \left(\frac{\gamma}{\left(2LM^2j_0\right)^{2(d+1)}}\right)^{10LM},
\end{equation*}
and further 
\begin{equation*}
	\mbox{\rm meas} (\mathfrak{R}(\gamma,L,M,j_0))\leq \gamma,
\end{equation*}
which finishes the proof of (\ref{090402}).
\end{proof}

\section{Appendix}
We collect below the lemmas implicated in the proof of Lemma~\ref{nlsthm}.

\begin{lem}[\cite{BGG85}]\label{BGG85}
	Let $\bm v^{(1)},\cdots, \bm v^{(r)}\in\R^r$ be $r$ linearly independent vectors with $|\bm v^{(l)}|_1\leq M$ for $1\leq l\leq r.$ Then for any  $\bm w\in\R^r$, we have
	%\begin{align*}
	%\sup_{1\leq l\leq r}|\bm w\cdot \bm v^{(l)}|<\xi\ {\rm for}\ \xi>0.
	%\end{align*}
	\begin{align*}
		\max_{1\leq l\leq r}|\bm w\cdot\bm v^{(l)}|\geq r^{-3/2}M^{1-r}|\bm w|_2\cdot|\det \left[ \bm v^{(l)}\right]_{1\leq l\leq r}|.%<rK^{r-1}\xi.
	\end{align*}
	%where $|w|_2^2=\sum\limits_{1\leq j\leq r}w_j^2.$
\end{lem}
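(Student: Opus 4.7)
The plan is to recast the inequality as a statement about the $r \times r$ matrix $V$ whose $l$-th row is $\bm v^{(l)}$, so that $\max_l |\bm w \cdot \bm v^{(l)}| = |V\bm w|_\infty$. Since the rows are linearly independent, $V$ is invertible, and writing $\bm w = V^{-1}(V\bm w)$ together with the standard comparison $|\cdot|_2 \leq \sqrt{r}\,|\cdot|_\infty$ on $\R^r$ gives
\[
|\bm w|_2 \;\leq\; \|V^{-1}\|_{\rm op}\cdot |V\bm w|_2 \;\leq\; \sqrt{r}\,\|V^{-1}\|_{\rm op}\cdot |V\bm w|_\infty.
\]
The entire problem therefore reduces to establishing the bound $\|V^{-1}\|_{\rm op}\leq r\, M^{r-1}/|\det V|$.

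For this I would invoke Cramer's rule, writing $V^{-1}=(\det V)^{-1}\operatorname{adj}(V)$. Each entry of $\operatorname{adj}(V)$ is, up to sign, an $(r-1)\times(r-1)$ minor of $V$; the rows of any such minor are sub-vectors of rows of $V$, hence have Euclidean length at most $M$, using the elementary comparison $|\bm v^{(l)}|_2\leq|\bm v^{(l)}|_1\leq M$. Hadamard's inequality then gives $|\operatorname{adj}(V)_{ij}|\leq M^{r-1}$, and since the operator norm is dominated by the Frobenius norm, $\|V^{-1}\|_{\rm op}\leq\|V^{-1}\|_F\leq r\,M^{r-1}/|\det V|$. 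Substituting this into the inequality of the first paragraph yields $|V\bm w|_\infty\geq r^{-3/2}M^{1-r}|\bm w|_2\cdot|\det V|$, which is exactly the claim.

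There is no real obstacle: the argument is a textbook combination of Cramer's rule and Hadamard's inequality. The slightly peculiar exponent $r^{-3/2}$ splits cleanly as $r^{-1/2}$ from the $\ell^2$-vs.-$\ell^\infty$ comparison and $r^{-1}$ from bounding the Frobenius norm of an $r\times r$ matrix by $r$ times the supremum of its entries. The only place the $\ell^1$ hypothesis $|\bm v^{(l)}|_1\leq M$ enters is via the trivial upgrade $|\bm v^{(l)}|_2\leq M$; a more delicate argument working directly with $\ell^1$ norms could tighten the constants, but is unnecessary for the application in Lemma~\ref{nlsthm}.
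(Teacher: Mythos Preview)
Your argument is correct. The combination of Cramer's rule, Hadamard's inequality, and the elementary norm comparisons $|\cdot|_2\leq\sqrt{r}\,|\cdot|_\infty$ and $\|\cdot\|_{\rm op}\leq\|\cdot\|_F$ is exactly what is needed, and the constants line up as you say: $r^{-3/2}=r^{-1/2}\cdot r^{-1}$ accounts for the two norm comparisons, while $M^{r-1}$ is the Hadamard bound on each cofactor.

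As for comparison with the paper: there is nothing to compare. The paper does not prove this lemma; it is quoted in the Appendix as a cited result from \cite{BGG85} and used as a black box in the measure estimate of Lemma~\ref{nlsthm}. Your self-contained proof therefore supplies what the paper omits.
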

	\begin{lem}[\cite{KM98}]\label{km98}
	Let $I\subset \R$ be an interval of finite length (i.e., $0<|I|<\infty$) and $k\geq 1.$  If  $f\in C^{k}(I;\R)$ satisfies 
	\begin{align*}
		\inf_{x\in I}\left|\frac{d^k}{dx^k} f(x)\right|\geq A>0,
	\end{align*}
	then for all $\gamma>0,$
	\begin{align*}
		{\rm meas}(\{x\in I:\ |f(x)|\leq \gamma\})\leq  \zeta_k\left( \frac\gamma A\right)^{\frac1k},
	\end{align*}
	where 
	$\zeta_k=k(k+1)\left((k+1)!\right)^{\frac1k}.$
\end{lem}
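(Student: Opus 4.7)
The plan is to prove Lemma~\ref{km98} by induction on $k$. For the base case $k=1$, the hypothesis $|f'|\geq A$ on $I$ forces $f$ to be strictly monotone, so $f$ bijects $\{x\in I:|f(x)|\leq\gamma\}$ onto a subset of $[-\gamma,\gamma]$ whose inverse is Lipschitz with constant at most $1/A$. Hence ${\rm meas}(\{|f|\leq\gamma\})\leq 2\gamma/A$, which is well within the claimed bound since $\zeta_1=4$.

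For the inductive step, I would assume the bound for $k-1$ and let $f\in C^k(I;\R)$ satisfy $\inf_I|f^{(k)}|\geq A$. Setting $g:=f'$ gives $|g^{(k-1)}|\geq A$, so by the inductive hypothesis,
\begin{align*}
{\rm meas}\bigl(\{x\in I:|f'(x)|\leq B\}\bigr)\leq \zeta_{k-1}\left(\frac{B}{A}\right)^{1/(k-1)}
\end{align*}
for every $B>0$. On the complementary set $\{|f'|>B\}$ the function $f$ is locally monotone. Since $f^{(k)}$ has constant sign on $I$, a Rolle-type induction shows that $f^{(j)}$ has at most $k-j$ zeros on $I$ for $0\leq j\leq k-1$; in particular $f'$ has at most $k-1$ zeros, and so $I$ decomposes into at most $k$ maximal subintervals of monotonicity of $f$. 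On each such subinterval, the change-of-variables estimate (the local inverse branch of $f$ has derivative at most $1/B$ where $|f'|>B$, and $f$ maps into $[-\gamma,\gamma]$) contributes measure at most $2\gamma/B$. Summing,
\begin{align*}
{\rm meas}\bigl(\{x\in I:|f(x)|\leq\gamma\}\bigr)\leq \frac{2k\gamma}{B}+\zeta_{k-1}\left(\frac{B}{A}\right)^{1/(k-1)}.
\end{align*}

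Finally I would optimize over $B>0$ by balancing the two terms: choosing $B$ of order $A^{1/k}\gamma^{(k-1)/k}$ (with a prefactor depending on $\zeta_{k-1}$) produces a bound of the form $C_k(\gamma/A)^{1/k}$. Careful bookkeeping yields the recursion
\begin{align*}
\zeta_k\geq \frac{k}{k-1}\bigl(2k(k-1)\bigr)^{1/k}\zeta_{k-1}^{(k-1)/k},
\end{align*}
which one verifies is implied by $\zeta_k=k(k+1)((k+1)!)^{1/k}$, closing the induction. The main subtlety—and really the only nontrivial input—is the Rolle-type count of at most $k$ monotonicity intervals, which is what lets the one-dimensional change of variables be applied globally on $\{|f'|>B\}$; the rest is a routine optimization. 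The exact value of $\zeta_k$ is inessential for the applications in this paper, where only the exponent $1/k$ is used, but the stated constant falls out directly from this argument.
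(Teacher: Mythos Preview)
The paper does not supply a proof of Lemma~\ref{km98}; it is quoted from \cite{KM98} and placed in the Appendix as a black-box input to Lemma~\ref{SW23-2}. So there is no ``paper's own proof'' to compare against.

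Your argument is correct and is in fact the standard inductive proof of this inequality (variants appear in \cite{KM98} and in earlier work of Pyartli and Bakhtin). A few points worth making explicit. First, in the inductive step you apply the hypothesis to $g=f'\in C^{k-1}$ with $|g^{(k-1)}|=|f^{(k)}|\geq A$, which is legitimate. Second, the claim that each monotonicity interval $J$ of $f$ contributes at most $2\gamma/B$ to the set $\{|f|\leq\gamma\}\cap\{|f'|>B\}$ deserves one extra line: since $f$ is monotone on $J$ and $\{x\in J:|f(x)|\leq\gamma\}$ is an interval on which the total variation of $f$ is at most $2\gamma$, one has
\[
\int_{J\cap\{|f|\leq\gamma\}\cap\{|f'|>B\}}dx\;\leq\;\frac{1}{B}\int_{J\cap\{|f|\leq\gamma\}}|f'(x)|\,dx\;\leq\;\frac{2\gamma}{B}.
\]
Third, your recursion $\zeta_k\geq \frac{k}{k-1}(2k(k-1))^{1/k}\zeta_{k-1}^{(k-1)/k}$ does come out of the exact minimization in $B$, and the verification that $\zeta_k=k(k+1)((k+1)!)^{1/k}$ satisfies it reduces to the elementary inequality $(k+1)^{k+1}\geq 2k^{k}$, which holds for all $k\geq1$. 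As you note, for the measure estimates in \S4 only the exponent $1/k$ matters, so the precise constant is immaterial here.
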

\begin{lem}[\cite{SW23-2}]\label{SW23-2}
	Fix $k\in \N$ and let $I=I_{\bm a, \bm b}=\prod_{i=1}^d[a_i,b_i]\subset\R^d$. Assume that  the function $f\in C^{k+1}(I;\R)$ satisfies for some $A>0,$
	\begin{align*}\label{ass1}
		\inf_{\bm x\in I}\sup_{1\leq l\leq k}\left|{d_{\bm \beta}^l}f(\bm x)\right|\geq A,
	\end{align*}
where 
\begin{align*}
	d_{\bm \beta}:=\sum_{i=1}^d\beta_i{\partial}_i,\ \partial_i:=\frac{\partial}{\partial x_i}.
\end{align*}
with ${\bm \beta}=(\beta_1,\cdots,\beta_d)\in\R^d\setminus\{0\}$
 and $${d_{\bm \beta}^l}f(\bm x)=d_{\bm \beta}(d_{\bm \beta}^{l-1}f(\bm x))$$
 for $l\geq 1$.
	Let 
	\begin{align*}
		\|f\|_{k+1}=\sup_{\bm x\in I}\sup_{1\leq |\bm \gamma|\leq k+1}\left| {\partial^{\bm \gamma}}f(\bm x)\right|<\infty
	\end{align*}
with 
\begin{align*}
	|\bm \gamma|=\sum_{i=1}^d\gamma_i,\  \partial^{\bm \gamma}=\partial^{\gamma_1}_1\cdots\partial^{\gamma_d}_d.
\end{align*}

	Then for $0<\varepsilon<A<1,$
	\begin{align*}
		\nonumber&\ \ \ {\rm meas}(\{\bm x\in I:\ |f(\bm x)|\leq \varepsilon\})\\
		&\leq C(\bm \beta,k,d) (\|f\|_{k+1}|\bm b-\bm a|_2+1)^d {|\bm a\vee\bm b|}^{d-1}A^{-(d+\frac1k)}{\varepsilon}^{\frac{1}{k}},
	\end{align*}
	where $C=C(\bm \beta, k,d)>0$ depends only on $\bm\beta, k,d $ (but not on $f$) and 
	$$|\bm a\vee \bm b|=\sum_{i=1}^d \max(|a_i|, |b_i|).$$
\end{lem}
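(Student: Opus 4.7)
The statement is a multi-dimensional Remez/van der Corput type sublevel set estimate for functions whose iterated directional derivatives along a fixed direction $\bm\beta$ are pointwise lower bounded (up to order $k$, with the index allowed to vary with the point). My plan is the natural reduction: slice $I$ into affine lines parallel to $\bm\beta$, handle each 1D slice via Lemma~\ref{km98} after a chaining step that fixes the derivative order, then conclude by Fubini over the transverse $(d-1)$-dimensional slice.

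\textbf{Slicing and parametrization.} Choose a coordinate $i_0$ with $|\beta_{i_0}|=\max_i|\beta_i|$, so $|\beta_{i_0}|\geq |\bm\beta|_2/\sqrt{d}>0$, and foliate $I$ by the affine lines $\{\bm y+t\bm\beta:\,t\in J(\bm y)\}$, where $\bm y$ varies in the coordinate projection $I':=\pi_{i_0}(I)\subset\R^{d-1}$ and $J(\bm y)$ is the connected interval of $t$'s for which the line lies in $I$. One has $|J(\bm y)|\leq |\bm b-\bm a|_2/|\beta_{i_0}|$ and $\mathrm{meas}_{d-1}(I')\leq |\bm a\vee\bm b|^{d-1}$; the change of variables $\bm x\mapsto(\bm y,t)$ has Jacobian $|\beta_{i_0}|$.

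\textbf{1D chaining step.} For fixed $\bm y$, set $g(t):=f(\bm y+t\bm\beta)$ on $J(\bm y)$. Then $g^{(l)}(t)=(d_{\bm\beta}^{l}f)(\bm y+t\bm\beta)$ and $\|g^{(l)}\|_\infty\leq|\bm\beta|_1^{l}\|f\|_{k+1}$ for $l\leq k+1$, while the hypothesis gives $\sup_{1\leq l\leq k}|g^{(l)}(t)|\geq A$ for every $t\in J(\bm y)$. Thus the open sets $U_l:=\{t\in J(\bm y):|g^{(l)}(t)|>A/2\}$ satisfy $\bigcup_{l=1}^{k}U_l=J(\bm y)$. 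Each $U_l$ is a disjoint union of intervals whose boundary points are solutions of $|g^{(l)}|=A/2$; by the mean value theorem applied to $g^{(l)}$, the number of such boundary points per $l$ is at most $(2/A)\|g^{(l+1)}\|_\infty|J(\bm y)|+1\lesssim(\|f\|_{k+1}|\bm b-\bm a|_2+1)/A$, uniformly in $\bm y$. Summing over $l=1,\dots,k$ yields a partition of $J(\bm y)$ into $N(\bm y)\lesssim_k(\|f\|_{k+1}|\bm b-\bm a|_2+1)/A$ subintervals, on each of which a single index $l$ provides $|g^{(l)}|\geq A/2$ throughout.

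\textbf{Application of Lemma~\ref{km98} and Fubini.} On each such subinterval, Lemma~\ref{km98} yields $\mathrm{meas}\{t:|g(t)|\leq\varepsilon\}\leq \zeta_{l}(A/2)^{-1/l}\varepsilon^{1/l}\leq C_k A^{-1/k}\varepsilon^{1/k}$ using $\varepsilon<A<1$ and $l\leq k$. Summing over the $N(\bm y)$ subintervals and integrating $\bm y$ over $I'$ via Fubini gives an estimate of order $(\|f\|_{k+1}|\bm b-\bm a|_2+1)\,|\bm a\vee\bm b|^{d-1}\,A^{-(1+1/k)}\,\varepsilon^{1/k}$, which is dominated by the claimed bound since $(\|f\|_{k+1}|\bm b-\bm a|_2+1)\leq(\|f\|_{k+1}|\bm b-\bm a|_2+1)^d$ and $A^{-(1+1/k)}\leq A^{-(d+1/k)}$ for $d\geq 1$ and $A\leq 1$. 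The extra slack in the stated exponents would alternatively fall out of an inductive argument on $d$, each inductive step introducing one factor of $(\|f\|_{k+1}|\bm b-\bm a|_2+1)/A$ through the analogous chaining in a new coordinate direction.

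\textbf{Main obstacle.} The delicate point is the chaining step: one must guarantee the partition of $J(\bm y)$ is genuinely finite (rather than merely measurable) and that its cardinality depends on $\bm y$ only through the uniform bound $\|f\|_{k+1}$, so Fubini can be applied cleanly. This is handled by the Rolle/mean-value counting above, but requires care if $g^{(l)}$ oscillates near the threshold $A/2$ on a set of positive measure — one then perturbs the threshold slightly, using the $C^{k+1}$ bound to control the perturbation. The second piece of bookkeeping is the dependence of $C(\bm\beta,k,d)$ on $\bm\beta$, which enters through the derivative-transfer factors $|\bm\beta|_1^{l}$ and the Jacobian $|\beta_{i_0}|^{-1}$; all are absorbed into a constant depending only on $\bm\beta,k,d$, but not on $f$ or $I$.
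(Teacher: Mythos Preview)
The paper does not supply its own proof of this lemma: it is quoted from the external reference [SW23-2], and the remark immediately following it records only that Lemmas~\ref{BGG85} and~\ref{km98} are ingredients in that proof. So there is no in-paper argument to compare against directly; your slice--chain--Fubini strategy is a natural route and, if executed correctly, would suffice (and would in fact give the sharper exponent $A^{-(1+1/k)}$ you note at the end).

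There is, however, a genuine gap in your chaining step. The claim that $U_l=\{t:|g^{(l)}(t)|>A/2\}$ has at most $(2/A)\|g^{(l+1)}\|_\infty|J|+1$ boundary points is false in general: take $g^{(l)}(t)=A/2+\delta\sin(Nt)$ on $[0,2\pi]$ with $0<\delta<A/(4\pi)$. Then $U_l=\{\sin(Nt)>0\}$ has $\sim 2N$ boundary points, while $\|g^{(l+1)}\|_\infty=\delta N$ and your bound gives $(4\pi\delta N/A)+1<N+1$. Your proposed remedy of perturbing the threshold does not rescue this, since the same oscillation occurs at every level in $(A/2-\delta,A/2+\delta)$; the mean value theorem controls only the number of \emph{alternations} between the levels $+A/2$ and $-A/2$, not repeated crossings of a single level. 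The correct substitute is a covering rather than a partition: for each $t$ the hypothesis gives $l(t)$ with $|g^{(l(t))}(t)|\geq A$, and by the mean value theorem $|g^{(l(t))}|\geq A/2$ on the interval $[t-r,t+r]$ with $r=A/(2\|g^{(l(t)+1)}\|_\infty)$. A greedy left-to-right selection of such intervals then covers $J$ in at most $|J|/r_{\min}+1\lesssim_{\bm\beta,k}(\|f\|_{k+1}|\bm b-\bm a|_2+1)/A$ steps, on each of which a fixed derivative order applies and Lemma~\ref{km98} delivers the $\varepsilon^{1/k}$ bound. With this replacement the remainder of your argument (Fubini over the transverse slice, absorption of the $\bm\beta$-dependent factors into $C(\bm\beta,k,d)$) goes through unchanged.
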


\begin{rem} Lemmas~\ref{BGG85} and \ref{km98} are two of the ingredients used in the proof of Lemma~\ref{SW23-2}. 
\end{rem}

\section*{Acknowledgments}
 %The authors are very grateful to the anonymous referees for their knowledgeable reports, which helped us to improve our manuscript.
    H. Cong  was supported by NSFC (11671066).  Y. Shi  was supported by NSFC (12271380). 
   W.-M. Wang acknowledges support from the
CY Initiative of Excellence, ``Investissements d'Avenir" Grant No. ANR-16-IDEX-0008.

 \bibliographystyle{alpha}
%\bibliography{CSW23}

\end{document}